\documentclass{article}

\usepackage{arxiv}

\usepackage[utf8]{inputenc} 
\usepackage[T1]{fontenc}    
\usepackage{hyperref}       
\usepackage{url}            
\usepackage{booktabs}       
\usepackage{amsfonts}       
\usepackage{nicefrac}       
\usepackage{microtype}      
\usepackage{lipsum}		
\usepackage{amsthm}
\usepackage{bussproofs}
\usepackage{rotating}
\usepackage{amsmath}
\usepackage{graphicx}
\usepackage{bussproofs}
\usepackage{amsmath}
\usepackage{amssymb}
\usepackage{amsthm,amsmath,amssymb,amsfonts}
\usepackage{float}
\usepackage{subfig}
\usepackage{tikz-cd}
\usepackage{tikz}
\usepackage{keycommand}
\usetikzlibrary{cd}
\newtheorem{theorem}{Theorem}
\newtheorem{corollary}{Corrollary}
\newtheorem{proposition}{Proposition}
\newtheorem{lemma}{Lemma}
\newtheorem{definition}{Definition}

\newtheorem{example}{Example}

\newenvironment{bprooftree}
{\leavevmode\hbox\bgroup}
{\DisplayProof\egroup}

\title{An alternative approach to the calculation of fundamental groups based on labeled natural deduction}


\author{
Tiago Mendon\c{c}a Lucena de Veras\\ 
 Departamento de Matem\'atica\\
 Universidade Federal Rural de Pernambuco\\
 Recife, Brasil \\
  \texttt{tiago.veras@ufrpe.br} \\
   \And
 Arthur F. Ramos \\
  Centro de Inform\'atica\\
 Universidade Federal de Pernambuco\\
  Recife, Brasil \\
  \texttt{afr@cin.ufpe.br} \\
 \And
  Ruy J. G. B. de Queiroz \\
  Centro de Inform\'atica\\
 Universidade Federal de Pernambuco\\
  Recife, Brasil \\
  \texttt{ruy@cin.ufpe.br} \\
\And
  Anjolina G. de Oliveira \\
  Centro de Inform\'atica\\
 Universidade Federal de Pernambuco\\
  Recife, Brasil \\
  \texttt{ago@cin.ufpe.br} \\
}

\begin{document}
\maketitle

	\begin{abstract}
       
      In this work, we use a labelled deduction system based on the concept of computational paths (sequence of rewrites) as equalities between two terms of the same type. We also define a term rewriting system that is used to make computations between these computational paths, establishing equalities between equalities. We use a labelled deduction system based on the concept of computational paths (sequence of rewrites) to obtain some results of algebraic topology and with support of the Seifet-Van Kampen Theorem we will  calculate, in a way less complex than the one made in mathematics \cite{Munkres} and  the technique of homotopy type theory \cite{hott}, the fundamental group of Klein Blottle $\mathbb{K}^2$, of the Torus $\mathbb{T}^2$ and Two holed Torus $\mathbb{M}_2=\mathbb{T}^2\# \mathbb{T}^2$ (the connected sum two torus).
      
 \end{abstract}

\keywords {Fundamental Group \and Labelled Natural Deduction \and Term Rewriting System \and Computational Paths \and Algebraic Topology \and Seifert-Van Kampen Theorem.}

	\section{Introduction}\label{intro}

The identity type is arguably one of the most interesting entities of  Martin-L\"{o}f type theory. From any type $A$, it is possible to construct the identity type $Id_A (x,y)$. This type establishes the relation of identity between two terms of $A$, i.e., if there is  $x =_p y: A$, then $p$ is a witness or proof that $x$ is indeed equal to $y$.  The proposal of the Univalence Axiom made the identity type one of the most studied aspects of type theory. It proposes that $x=y$ is equivalent to saying that $x\simeq y$, that is, the identity is an equivalence of equivalences. Another important aspect is the fact that it is possible to interpret the as paths between two points of the same space. This interpretation gives rise to the interesting interpretation of equality as a collection of homotopical paths. This connection of type theory and homotopy theory makes type theory a suitable foundation for both computation and mathematics. Nevertheless, this interpretation is only a semantical one \cite{email} and it was not proposed with a syntatical counterpart for the concept of path in type theory. For that reason, the addition of paths to the syntax of homotopy type theory has been recently proposed by De Queiroz, Ramos and De Oliveira \cite{Ruy1,Art3}, in these works, the authors use an entity known as `computational path', proposed by De Queiroz and Gabbay in 1994 \cite{Ruy4}, and show that it can be used to formalize the identity type.

On the other hand, one of the main interesting points of the interpretation of logical connectives via deductive systems which use a labelling system is the clear separation between a functional calculus on the labels (the names that record the steps of the proof) and a logical calculus on the formulas \cite{Lof1,Ruy4}. Moreover, this interpretation has important applications. The works of \cite{Ruy1,Ruy4,Ruy5,RuyAnjolinaLivro} claim that the harmony that comes with this separation makes labelled natural deduction a suitable framework to study and develop a theory of equality for natural deduction. Take, for example, the following cases taken from the $\lambda$-calculus \cite{Ruy5}:

\begin{center}
$(\lambda x.(\lambda y.yx)(\lambda w.zw))v \rhd_{\eta} (\lambda x.(\lambda y.yx)z)v \rhd_{\beta} (\lambda y.yv)z \rhd_{\beta} zv$

$(\lambda x.(\lambda y.yx)(\lambda w.zw))v \rhd_{\beta} (\lambda x(\lambda w.zw)x)v \rhd_{\eta} (\lambda x.zx)v \rhd_{\beta} zv$
\end{center}

In the theory of the $\beta\eta$-equality of $\lambda$-calculus, we can indeed say that $(\lambda x.(\lambda y.yx)(\lambda w.zw))v$ is equal to $zv$. Moreover, as we can see above, we have at least two ways of obtaining these equalities. We can go further, and call $s$ the first sequence of \textit{rewrites} that establish that $(\lambda x.(\lambda y.yx)(\lambda w.zw))v$ is indeed equal to $zv$. The second one, for example, we can call $r$. Thus, we can say that this equality is established by $s$ and $r$. As we will see in this paper, we $s$ and $r$ are examples of an entity known as \textit{computational path}. 

Since we now have labels (computational paths) that establishes the equality between two terms, interesting questions might arise: is $s$ different of $r$ or are they normal forms of this equality proof? If $s$ is equal to $r$, how can we prove this? We can answer questions like this when we work in a labelled natural deduction framework. The idea is that we are not limited by the calculus on the formulas, but we can also define and work with rules that apply to the labels. That way, we can use these rules to formally establish the equality between these labels, i.e., establish equalities between equalities. In this work, we will use a system proposed by \cite{Anjo1} and known as $LND_{EQ}$-$TRS$.

In that context, the contribution of this paper will be to propose a surprising connection: it is possible to use a labelled natural deduction system based on the concept of computational paths (sequence of rewrites) together with $LND_{EQ}$-$TRS$ to obtain some results of algebraic topology and study of fundamental groups of some surfaces with the support of the Seifert-Van Kampen Theorem.

Indeed, in this paper we will develop a theory and show that it is powerful enough to calculate the fundamental group of a circle, torus and real projective plane. For this, e use a labelled deduction system based on the concept of computational paths (sequence of rewrites).  Taking into account that in mathematics \cite{Munkres} the calculation of this fundamental group is quite laborious, we believe our work accomplishes this calculation in a less complex form. Nevertheless, to obtain this result we need to first formally define the concept of computational paths and define $LND_{EQ}$-$TRS$.

  	\section{Computational Paths} \label{path}
	
	In this section, our objective is to give a brief introduction to the theory of computational paths. One should refer to \cite{Ruy1,Art3} for a detailed development of this theory. 
	
	A computational path is based on the idea that it is possible (and useful!) to formally define when two computational objects $a,b : A$ are equal. These two objects are equal if one can reach $b$ from $a$ by applying a sequence of axioms or rules of inference. Such a sequence of operations forms a path. Since it is essentially an operation between two computational objects, it is said that this path is a computational one. Also, an application of an axiom or an inference rule transforms (or rewrites) a term into another. For that reason, a computational path is also known as a sequence of rewrites. Nevertheless, before we define formally a computational path, we can take a look at the rather standard equality theory, the $\lambda\beta\eta-equality$ \cite{lambda}:
	\begin{definition}
		The \emph{$\lambda\beta\eta$-equality} is composed by the following axioms:
		\begin{enumerate}
			\item[$(\alpha)$] $\lambda x.M = \lambda y.M[y/x]$ \quad if $y \notin FV(M)$;
			\item[$(\beta)$] $(\lambda x.M)N = M[N/x]$;
			\item[$(\rho)$] $M = M$;
			\item[$(\eta)$] $(\lambda x.Mx) = M$ \quad $(x \notin FV(M))$.
		\end{enumerate}
		And the following rules of inference:
		\bigskip
		\noindent
		\begin{bprooftree}
			\AxiomC{$M = M'$ }
			\LeftLabel{$(\mu)$ \quad}
			\UnaryInfC{$NM = NM'$}
		\end{bprooftree}
		\begin{bprooftree}
			\AxiomC{$M = N$}
			\AxiomC{$N = P$}
			\LeftLabel{$(\tau)$}
			\BinaryInfC{$M = P$}
		\end{bprooftree}
		
		\bigskip
		\noindent
		\begin{bprooftree}
			\AxiomC{$M = M'$ }
			\LeftLabel{$(\nu)$ \quad}
			\UnaryInfC{$MN = M'N$}
		\end{bprooftree}
		\begin{bprooftree}
			\AxiomC{$M = N$}
			\LeftLabel{$(\sigma)$}
			\UnaryInfC{$N = M$}
		\end{bprooftree}
		
		\bigskip
		\noindent
		\begin{bprooftree}
			\AxiomC{$M = M'$ }
			\LeftLabel{$(\xi)$ \quad}
			\UnaryInfC{$\lambda x.M= \lambda x.M'$}
		\end{bprooftree}
		
		
		
	\end{definition}
	
	
	
	
	
	\begin{definition}($\beta$-equality \cite{lambda})
		$P$ is $\beta$-equal or $\beta$-convertible to $Q$  (notation $P=_\beta Q$)
		iff $Q$ is obtained from $P$ by a finite (perhaps empty)  series of $\beta$-contractions
		and reversed $\beta$-contractions  and changes of bound variables.  That is,
		$P=_\beta Q$ iff \textbf{there exist} $P_0, \ldots, P_n$ ($n\geq 0$)  such that
		$P_0\equiv P$,  $P_n\equiv Q$,
		$(\forall i\leq n-1) (P_i\triangleright_{1\beta}P_{i+1}  \mbox{ or }P_{i+1}\triangleright_{1\beta}P_i  \mbox{ or } P_i\equiv_\alpha P_{i+1}).$
	\end{definition}

	The same happens with $\lambda\beta\eta$-equality:
	\begin{definition}[$\lambda\beta\eta$-equality \cite{lambda}]
		The equality-relation determined by the theory $\lambda\beta\eta$ is called
		$=_{\beta\eta}$; that is, we define
		$$M=_{\beta\eta}N\quad\Leftrightarrow\quad\lambda\beta\eta\vdash M=N.$$
	\end{definition}
	
	\begin{example}
		Take the term $M\equiv(\lambda x.(\lambda y.yx)(\lambda w.zw))v$. Then, it is $\beta\eta$-equal to $N\equiv zv$ because of the sequence:\\
		$(\lambda x.(\lambda y.yx)(\lambda w.zw))v, \quad  (\lambda x.(\lambda y.yx)z)v, \quad   (\lambda y.yv)z , \quad zv$\\
		which starts from $M$ and ends with $N$, and each member of the sequence is obtained via 1-step $\beta$-contraction or $\eta$-contraction of a previous term in the sequence. To take this sequence into a {\em path\/}, one has to apply transitivity twice, as we will see below. Taking this sequence into a path leads us to the following:
	
		\noindent The first is equal to the second based on the grounds:\\
		$\eta((\lambda x.(\lambda y.yx)(\lambda w.zw))v,(\lambda x.(\lambda y.yx)z)v)$\\
		The second is equal to the third based on the grounds:\\
		$\beta((\lambda x.(\lambda y.yx)z)v,(\lambda y.yv)z)$\\
		Now, the first is equal to the third based on the grounds:\\
		$\tau(\eta((\lambda x.(\lambda y.yx)(\lambda w.zw))v,(\lambda x.(\lambda y.yx)z)v),\beta((\lambda x.(\lambda y.yx)z)v,(\lambda y.yv)z))$\\
		Now, the third is equal to the fourth one based on the grounds:\\
		$\beta((\lambda y.yv)z,zv)$\\
		Thus, the first one is equal to the fourth one based on the grounds:\\
		$\tau(\tau(\eta((\lambda x.(\lambda y.yx)(\lambda w.zw))v,(\lambda x.(\lambda y.yx)z)v),\beta((\lambda x.(\lambda y.yx)z)v,(\lambda y.yv)z)),\beta((\lambda y.yv)z,zv)))$.
	\end{example}
	
	
	The aforementioned theory establishes the equality between two $\lambda$-terms. Since we are working with computational objects as terms of a type, we need to translate the $\lambda\beta\eta$-equality to a suitable equality theory based on Martin L\"of's type theory. For the $\Pi$-type, for example, we obtain:
	
	\begin{definition}
		The equality theory of Martin L\"of's type theory has the following basic proof rules for the $\Pi$-type \cite{Ruy1,Art3}:
		
		\bigskip
		
		\noindent
		\begin{bprooftree}
			\hskip -0.3pt
			\alwaysNoLine
			\AxiomC{$N : A$}
			\AxiomC{$[x : A]$}
			\UnaryInfC{$M : B$}
			\alwaysSingleLine
			\LeftLabel{$(\beta$) \quad}
			\BinaryInfC{$(\lambda x.M)N = M[N/x] : B[N/x]$}
		\end{bprooftree}
		\begin{bprooftree}
			\hskip 17pt
			\alwaysNoLine
			\AxiomC{$[x : A]$}
			\UnaryInfC{$M = M' : B$}
			\alwaysSingleLine
			\LeftLabel{$(\xi)$ \quad}
			\UnaryInfC{$\lambda x.M = \lambda x.M' : \Pi_{(x : A)}B$}
		\end{bprooftree}
		
		\bigskip
		
		\noindent
		\begin{bprooftree}
			\hskip -0.5pt
			\AxiomC{$M : A$}
			\LeftLabel{$(\rho)$ \quad}
			\UnaryInfC{$M = M : A$}
		\end{bprooftree}
		\begin{bprooftree}
			\hskip 100pt
			\AxiomC{$M = M' : A$}
			\AxiomC{$N : \Pi_{(x : A)}B$}
			\LeftLabel{$(\mu)$ \quad}
			\BinaryInfC{$NM = NM' : B[M/x]$}
		\end{bprooftree}
		
		\bigskip
		
		\noindent
		\begin{bprooftree}
			\hskip -0.5pt
			\AxiomC{$M = N : A$}
			\LeftLabel{$(\sigma) \quad$}
			\UnaryInfC{$N = M : A$}
		\end{bprooftree}
		\begin{bprooftree}
			\hskip 105pt
			\AxiomC{$N : A$}
			\AxiomC{$M = M' : \Pi_{(x : A)}B$}
			\LeftLabel{$(\nu)$ \quad}
			\BinaryInfC{$MN = M'N : B[N/x]$}
		\end{bprooftree}
		
		\bigskip
		
		\noindent
		\begin{bprooftree}
			\hskip -0.5pt
			\AxiomC{$M = N : A$}
			\AxiomC{$N = P : A$}
			\LeftLabel{$(\tau)$ \quad}
			\BinaryInfC{$M = P : A$}
		\end{bprooftree}
		
		\bigskip
		
		\noindent
		\begin{bprooftree}
			\hskip -0.5pt
			\AxiomC{$M: \Pi_{(x : A)}B$}
			\LeftLabel{$(\eta)$ \quad}
			\RightLabel {$(x \notin FV(M))$}
			\UnaryInfC{$(\lambda x.Mx) = M: \Pi_{(x : A)}B$}
		\end{bprooftree}
		
		\bigskip
		
	\end{definition}
	
	We are finally able to formally define computational paths:
	
	\begin{definition}
		Let $a$ and $b$ be elements of a type $A$. Then, a \emph{computational path} $s$ from $a$ to $b$ is a composition of rewrites (each rewrite is an application of an inference rule of the equality theory of type theory or is a change of bound variables). We denote that by $a =_{s} b$.
	\end{definition}
	
	As we have seen in \textbf{example 1}, compositions of rewrites are applications of the rule $\tau$. Since change of bound variables is possible, each term is considered up to $\alpha$-equivalence.
	
	\section{A Term Rewriting System for Paths}
	
	As we have just shown, a computational path establishes when two terms of the same type are equal. From the theory of computational paths, an interesting case arises. Suppose we have a path $s$ that establishes that $a =_{s} b : A$ and a path $t$ that establishes that $a =_{t} b : A$. Consider that $s$ and $t$ are formed by distinct compositions of rewrites. Is it possible to conclude that there are cases that $s$ and $t$ should be considered equivalent? The answer is \emph{yes}. Consider the following example:
	
	\begin{example}
		\noindent \normalfont Consider the path  $a =_{t} b : A$. By the symmetry property, we obtain $b =_{\sigma(t)} a : A$. What if we apply the property again on the path $\sigma(t)$? We would obtain a path  $a =_{\sigma(\sigma(t))} b : A$. Since we applied symmetry twice in succession, we obtained a path that is equivalent to the initial path $t$. For that reason, we conclude the act of applying symmetry twice in succession is a redundancy. We say that the path $\sigma(\sigma(t))$ can be reduced to the path $t$.
	\end{example}
	
	As one could see in the aforementioned example, different paths should be considered equal if one is just a redundant form of the other. The example that we have just seen is just a straightforward and simple case. Since the equality theory has a total of 7 axioms, the possibility of combinations that could generate redundancies are rather high. Fortunately, most possible redundancies were thoroughly mapped by \cite{Anjo1}. In that work, a system that establishes redundancies and creates rules that solve them was proposed. This system, known as $LND_{EQ}$-$TRS$, originally mapped a total of 39 rules. For each rule, there is a proof tree that constructs it.  We included all rules in \textbf{appendix B}. To illustrate those rules, take the case of \textbf{example 2}. We have the following \cite{Ruy1}:
	
	\begin{prooftree}
		\AxiomC{$x =_{t} y : A$}
		\UnaryInfC{$y =_{\sigma(t)} x : A$}
		\RightLabel{\quad $\rhd_{ss}$ \quad $x =_{t} y : A$}
		\UnaryInfC{$x =_{\sigma(\sigma(t))} y : A$}
	\end{prooftree}
	
	\bigskip
	
	It is important to notice that we assign a label to every rule. In the previous case, we assigned the label $ss$. 
	\begin{definition}[$rw$-rule \cite{Art3}]
		\normalfont An $rw$-rule is any of the rules defined in $LND_{EQ}$-$TRS$.
	\end{definition}
	
	\begin{definition}[$rw$-contraction \cite{Art3}]
		Let $s$ and $t$ be computational paths. We say that $s \rhd_{1rw} t$ (read as: $s$ $rw$-contracts to $t$) iff we can obtain $t$ from $s$ by an application of only one $rw$-rule. If $s$ can be reduced to $t$ by finite number of $rw$-contractions, then we say that $s \rhd_{rw} t$ (read as $s$ $rw$-reduces to $t$).
		
	\end{definition}
	
	\begin{definition}[$rw$-equality \cite{Art3}]
		\normalfont  Let $s$ and $t$ be computational paths. We say that $s =_{rw} t$ (read as: $s$ is $rw$-equal to $t$) iff $t$ can be obtained from $s$ by a finite (perhaps empty) series of $rw$-contractions and reversed $rw$-contractions. In other words, $s =_{rw} t$ iff there exists a sequence $R_{0},....,R_{n}$, with $n \geq 0$, such that
		
		\centering $(\forall i \leq n - 1) (R_{i}\rhd_{1rw} R_{i+1}$ or $R_{i+1} \rhd_{1rw} R_{i})$
		
		\centering  $R_{0} \equiv s$, \quad $R_{n} \equiv t$
	\end{definition}
	
	\begin{proposition}\label{proposition3.7} $rw$-equality  is transitive, symmetric and reflexive.
	\end{proposition}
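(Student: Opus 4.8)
The plan is to unwind \textbf{Definition 12} ($rw$-equality) and observe that each of the three properties is a direct consequence of the combinatorial shape of the witnessing sequences $R_0, \ldots, R_n$. Recall that $s =_{rw} t$ means precisely that there is a finite sequence $R_0 \equiv s, \ldots, R_n \equiv t$ (with $n \geq 0$) in which consecutive terms are related by $\rhd_{1rw}$ in one direction \emph{or} the other. So the whole argument is a matter of manipulating such sequences: taking the trivial one, reversing one, and concatenating two.

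For \emph{reflexivity}, I would take $n = 0$ and the one-term sequence $R_0 \equiv s$. The condition ``$(\forall i \leq n-1)(\ldots)$'' is then vacuous, and $R_0 \equiv s$, $R_n \equiv s$, so $s =_{rw} s$ by an empty series of $rw$-contractions. For \emph{symmetry}, given $s =_{rw} t$ witnessed by $R_0, \ldots, R_n$, I would define the reversed sequence $R'_j := R_{n-j}$ for $0 \leq j \leq n$, so that $R'_0 \equiv R_n \equiv t$ and $R'_n \equiv R_0 \equiv s$. For each $j \leq n-1$, writing $i := n - j - 1$, the required disjunction ``$R'_j \rhd_{1rw} R'_{j+1}$ or $R'_{j+1} \rhd_{1rw} R'_j$'' is literally the disjunction ``$R_{i+1} \rhd_{1rw} R_i$ or $R_i \rhd_{1rw} R_{i+1}$'', which holds because the original sequence satisfies ``$R_i \rhd_{1rw} R_{i+1}$ or $R_{i+1} \rhd_{1rw} R_i$'' and disjunction is commutative. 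Hence $t =_{rw} s$.

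For \emph{transitivity}, suppose $s =_{rw} t$ via $R_0, \ldots, R_n$ and $t =_{rw} u$ via $S_0, \ldots, S_m$, so that $R_n \equiv t \equiv S_0$. I would form the concatenated sequence $R_0, \ldots, R_{n-1}, R_n (= S_0), S_1, \ldots, S_m$ of length $n + m$, whose first term is $s$ and whose last term is $u$. At every consecutive pair the required disjunction holds: for pairs lying inside the $R$-block it is inherited from the first sequence, for pairs inside the $S$-block from the second, and at the splice point $R_n = S_0$ there is nothing to check beyond what the two blocks already guarantee. Therefore $s =_{rw} u$.

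None of the three steps presents a genuine obstacle; the only point requiring a little care is the index bookkeeping in the symmetry argument (verifying that reversing the sequence genuinely preserves the ``$\rhd_{1rw}$ in some direction'' clause, which works only because that clause is itself symmetric in its two disjuncts) and, in the transitivity argument, making explicit that the endpoint of the first sequence is syntactically the same path as the start of the second so that the concatenation is well-formed.
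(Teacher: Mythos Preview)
Your argument is correct and is essentially the same as the paper's, only more explicit: the paper's proof is the single sentence ``It follows directly from the fact that $rw$-equality is the transitive, reflexive and symmetric closure of $rw$,'' whereas you unpack exactly what that closure means at the level of witnessing sequences. Nothing is missing or wrong in your version.
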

	
	\begin{proof}
		It follows directly from the fact that $rw$-equality is the transitive, reflexive and symmetric closure of $rw$.
	\end{proof}
	
	The above proposition is rather important, since sometimes we want to work with paths up to $rw$-equality. For example, we can take a path $s$ and use it as a representative of an equivalence class, denoting this by $[s]_{rw}$.
	
	We'd like to mention that  $LND_{EQ}$-$TRS$ is terminating and confluent. The proof of this can be found in \cite{Anjo1,Ruy2,Ruy3,RuyAnjolinaLivro}.
	
	One should refer to \cite{Ruy1,RuyAnjolinaLivro} for a more complete and detailed explanation of the rules of $LND_{EQ}$-$TRS$.

	
	
	\section{Fundamental Group of surfaces and results of algebraic topology obtained by means of Computational Paths}
	
	The objective of this section is to obtain the fundamental group of the surfaces like Klein Bottle, Torus and Two-Holed Torus by means of computational paths together with Seifert-Van Kampen Theorem. This calculation will be carried out in a less complex way than that used in Homotopy Type Theory and the one used in Maths. 
	
	
	Therefore, in the next subsection we will ensure that some algebraic topology results are valid using computational paths, we will also need of the deformation retract definition and the statement of the Seifert-Van Kampen Theorem. In the following subsection we will calculate the fundamental group of the Klein bottle, of the torus and the connected sum of two tori.

\subsection{Proof of results of algebraic topology by computational paths}	

In this subsection we will prove some results of the algebraic topology by means of computational paths. Such results are indispensable for obtaining our main result which is to get the fundamental group of the Klein bottle. Furthermore, these proofs establish, even more, computational paths as a working tool.

The next definition is necessary from the proof of the theorem that happens. The topological result of this theorem is central to the conclusion of the fundamental group of the Klein bottle.

\begin{definition}
 Let $X$ be a space connected by paths and $x_{0}\underset{\mu}{=} x_{1}$, that is, $\mu$ is a path between $x_0$ and $x_1$ where $x_0, x_1 \in X$.
\end{definition}

Define the map  $$\varphi_{\mu}: \pi_1(X,x_0) \longrightarrow \pi_1(X,x_1)$$ given by: $$\varphi_{\mu}([\alpha_{x_0)}])=[\sigma(\mu)]*[\alpha_{x_0}]*[\mu]= \tau \big(\tau \big(\sigma(\mu),\alpha_{x_0}\big),\mu \big),$$

where $\alpha_{x_0}$ is a loop in $X$ with base point $x_0$. See that $\varphi_{\mu}$ is well defined because  the map $\tau$ is well defined. So  $\varphi_{\mu}([\alpha_{x_0}])$ is a loop in $X$ with base point $x_1$.

Therefore, the $\varphi_{\mu}$ map takes a $\alpha_{x_0} \in \pi_1(X,x_0)$ and takes in a $\varphi_{\mu}([\alpha_{x_0}]) \in \pi_1(X,x_1)$.
\begin{theorem}
The map $\varphi_{\mu}$ is an isomorphism, that is, $\pi_{1}(X,x_0) \simeq \pi_{1}(X,x_1)$.
\end{theorem}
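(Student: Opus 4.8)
The plan is to run the classical change-of-basepoint argument from algebraic topology, but entirely at the level of computational paths and their $rw$-equalities. First I would isolate the auxiliary facts that play the role of the groupoid laws and note that each is an instance of a rule of $LND_{EQ}$-$TRS$ listed in \textbf{Appendix B} (hence usable without separate proof): associativity of $\tau$ up to $rw$-equality, so that $\tau(\tau(a,b),c)$ and $\tau(a,\tau(b,c))$ may be identified; the cancellation laws $\tau(\sigma(t),t) =_{rw} \rho$ and $\tau(t,\sigma(t)) =_{rw} \rho$; the unit laws $\tau(\rho,t) =_{rw} t =_{rw} \tau(t,\rho)$; and the involution law $\sigma(\sigma(t)) =_{rw} t$ (the $ss$-rule of \textbf{Example 2}). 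All manipulations below take place on $rw$-equivalence classes $[\,\cdot\,]_{rw}$, and the map $\varphi_\mu$ is well defined on $\pi_1(X,x_0)$ because $\tau$ and $\sigma$ respect $rw$-equality, as already observed.

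Second, I would check that $\varphi_\mu$ is a group homomorphism. For loops $\alpha_{x_0}, \beta_{x_0}$ at $x_0$ we have $[\alpha_{x_0}]*[\beta_{x_0}] = [\tau(\alpha_{x_0},\beta_{x_0})]$, so $\varphi_\mu([\alpha_{x_0}]*[\beta_{x_0}]) = [\tau(\tau(\sigma(\mu),\tau(\alpha_{x_0},\beta_{x_0})),\mu)]$, whereas $\varphi_\mu([\alpha_{x_0}])*\varphi_\mu([\beta_{x_0}])$ expands to $[\tau(\tau(\tau(\sigma(\mu),\alpha_{x_0}),\mu),\tau(\tau(\sigma(\mu),\beta_{x_0}),\mu))]$. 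Re-bracketing with associativity brings an occurrence of $\tau(\mu,\sigma(\mu))$ into the middle; the cancellation law collapses it to $\rho$, the unit law deletes it, and one further re-bracketing yields exactly $\tau(\tau(\sigma(\mu),\tau(\alpha_{x_0},\beta_{x_0})),\mu)$. Hence the two classes coincide. Taking $\alpha_{x_0} \equiv \rho$ together with the cancellation and unit laws shows in addition that $\varphi_\mu$ sends the identity class to the identity class.

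Third, I would exhibit an inverse. The natural candidate is $\varphi_{\sigma(\mu)} : \pi_1(X,x_1) \longrightarrow \pi_1(X,x_0)$, which by the involution law sends $[\gamma_{x_1}]$ to $[\tau(\tau(\mu,\gamma_{x_1}),\sigma(\mu))]$. Computing $\varphi_{\sigma(\mu)} \circ \varphi_\mu$ on a class $[\alpha_{x_0}]$, re-bracketing exposes the subterm $\tau(\mu,\sigma(\mu))$, which reduces to $\rho$ and then disappears by the unit law, leaving $\tau(\tau(\mu,\sigma(\mu)),\alpha_{x_0}) =_{rw} \tau(\rho,\alpha_{x_0}) =_{rw} \alpha_{x_0}$; the composite $\varphi_\mu \circ \varphi_{\sigma(\mu)}$ is handled symmetrically, this time using $\tau(\sigma(\mu),\mu) =_{rw} \rho$. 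Thus $\varphi_\mu$ is a bijective homomorphism, i.e. an isomorphism, so $\pi_1(X,x_0) \simeq \pi_1(X,x_1)$.

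The main obstacle is not conceptual but combinatorial: every re-bracketing and every cancellation above must be licensed by an actual rule of $LND_{EQ}$-$TRS$, and the deeply nested $\tau$-terms make it easy to apply a ``groupoid identity'' that is not literally present. Since $LND_{EQ}$-$TRS$ is terminating and confluent, the safest route — and the one I would follow — is to reduce each of the two paths in a claimed equality to its normal form and check that these normal forms agree, turning the verification into a finite, mechanical computation.
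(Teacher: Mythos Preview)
Your proposal is correct and follows essentially the same route as the paper: both prove the homomorphism property by reassociating to expose and cancel a $\tau(\mu,\sigma(\mu))$ subterm via the $tt$, $tr/tsr/tst$, and $tlr/trr$ rules, and both exhibit the inverse as conjugation by $\mu$ in the opposite direction (the paper names it $\kappa_\mu$, while you write it as $\varphi_{\sigma(\mu)}$ and invoke the $ss$-rule, which yields literally the same map). The only difference is presentational: the paper spells out each rewrite step explicitly, whereas you describe the reduction strategy and defer the line-by-line check to confluence and termination of $LND_{EQ}$-$TRS$.
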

\begin{proof}
Let $\alpha_{x_0}$ and $\beta_{x_0}$ be two loops with base point in $x_0\in X$, so:

\begin{eqnarray*}
    \varphi_{\mu}([\alpha_{x_0}])*\varphi_{\mu}([\beta_{x_0}])&=& \big([\sigma(\mu)]*[\alpha_{x_0}]*[\mu]\big)*\big([\sigma(\mu)]*[\beta_{x_0}]*[\mu]\big)\\
    &=&\tau \bigg( \tau \bigg(\tau \big(\sigma(\mu),\alpha_{x_0}\big),\mu \bigg), \tau \bigg(\tau \big(\sigma(\mu),\beta_{x_0}\big),\mu \bigg) \bigg) \\
    &\triangleright_{tt} &\tau \bigg( \tau \bigg(\tau \big(\sigma(\mu),\alpha_{x_0}\big),\mu \bigg), \tau \bigg(\sigma(\mu),\tau(\beta_{x_0},\mu ) \bigg) \bigg).
\end{eqnarray*}

We will use the rewrite rules   $$\tau(\tau(t,r),s) \triangleright_{tt} \tau(t,\tau (r,s)),$$

under the following conditions, put:
\begin{eqnarray*}
    t&=& (\tau \big(\sigma(\mu),\alpha_{x_0}\big) \\
    r&=&\mu\\
    s&=& \tau \big(\sigma(\mu),\beta_{x_0}\big),\mu \big).\\
\end{eqnarray*}
 So,
 
\begin{eqnarray*}
    &\triangleright_{tt} &\tau \bigg( \tau \bigg(\tau \big(\sigma(\mu),\alpha_{x_0}\big),\mu \bigg), \tau \bigg(\sigma(\mu),\tau(\beta_{x_0},\mu ) \bigg) \bigg)\\
    &\triangleright_{tt} &\tau\bigg( \tau \big(\sigma(\mu),\alpha_{x_0}\big), \tau \bigg(\mu,  \tau \bigg(\sigma(\mu),\tau(\beta_{x_0},\mu ) \bigg) \bigg)\\
    &\triangleright_{\sigma(tt)} &\tau\bigg( \tau \big(\sigma(\mu),\alpha_{x_0}\big), \tau \bigg( \tau \big(\mu,\sigma(\mu)\big),\tau(\beta_{x_0},\mu ) \bigg) \bigg)\\
    &\triangleright_{tst} &\tau\bigg( \tau \big(\sigma(\mu),\alpha_{x_0}\big), \tau \bigg( \rho,\tau(\beta_{x_0},\mu ) \bigg) \bigg)\\
    &\triangleright_{tlr} &\tau\bigg( \tau \big(\sigma(\mu),\alpha_{x_0}\big), \tau(\beta_{x_0},\mu ) \bigg)\\
    &\triangleright_{tt} &\tau\bigg( \sigma(\mu),\tau \bigg(\alpha_{x_0},\tau(\beta_{x_0},\mu )\bigg)\bigg)\\
    &\triangleright_{\sigma(tt)} &\tau\bigg( \tau\bigg(\sigma(\mu),\tau(\alpha_{x_0},\beta_{x_0})\bigg),\mu \bigg)\\
    &=&[\sigma(\mu)]*[\alpha_{x_0}*\beta_{x_0}]*[\mu]\\
    &=&\varphi_{\mu}([\alpha_{x_0}*\beta_{x_0}]).\\
\end{eqnarray*}

Therefore, $\varphi_{\mu}([\alpha_{x_0}])*\varphi_{\mu}([\alpha_{x_0}])=\varphi_{\mu}([\alpha_{x_0}*\beta_{x_0}])$ and then we have $\varphi_{\mu}$ is a homomorphism.

Now, we need to prove that there is an inverse map of $\varphi_{\mu}$. For this, let $\alpha_{x_1}$ be a loop with base point in $x_1\in X$ and set the map  $$\kappa_{\mu}: \pi_1(X,x_1) \longrightarrow \pi_1(X,x_0)$$ given by: $$\kappa_{\mu}([\alpha_{x_1)}])=[\mu]*[\alpha_{x_1}]*[\sigma(\mu])= \tau \big(\tau \big(\mu,\alpha_{x_1}\big),\sigma(\mu) \big).$$

Note that $\kappa([\alpha_{x_1}])$ is a loop with base point in $x_0$, so we can calculate:

\begin{eqnarray*}
    \varphi_{\mu}\big([\kappa_{\mu}([\alpha_{x_1}])]\big)&=& [\sigma(\mu)]*\big[[\mu]*[\alpha_{x_1}]*[\sigma(\mu)]\big]*[\mu]\\
    &=&\tau \bigg( \tau \bigg(\sigma(\mu),\tau \bigg(\tau (\mu,\alpha_{x_1}),\sigma(\mu)\bigg)\bigg), \mu \bigg) \\
    &\triangleright_{\sigma(tt)} &\tau \bigg( \tau \bigg(\tau \bigg(\sigma(\mu),\tau(\mu,\alpha_{x_1})\bigg), \sigma(\mu) \bigg),\mu \bigg)\\
     &\triangleright_{\sigma(tt)} &\tau \bigg( \tau \bigg( \tau\bigg(\tau \big(\sigma(\mu),\mu\big),\alpha_{x_1}\bigg), \sigma(\mu)\bigg),\mu \bigg)\\
    &\triangleright_{tsr} &\tau \bigg( \tau \bigg( \tau\big( \rho,\alpha_{x_1}\big), \sigma(\mu)\bigg),\mu \bigg)\\
    &\triangleright_{tlr} &\tau \bigg( \tau\big(\alpha_{x_1}, \sigma(\mu)\big),\mu \bigg)\\
    &\triangleright_{tt} &\tau \bigg( \alpha_{x_1},\tau \big( \sigma(\mu),\mu \big) \bigg)\\
    &\triangleright_{tsr} &\tau \big(\alpha_{x_1},\rho \big)\\ 
    &\triangleright_{tsr} &\alpha_{x_1}.
\end{eqnarray*}

We have then $ \varphi_{\mu}\big([\kappa_{\mu}([\alpha_{x_1}])]\big)=\alpha_{x_1}$, and in a similar way we can show that $ \kappa_{\mu}\big( [\varphi_{\mu}([\alpha_{x_0}])] \big) =\alpha_{x_0}$. This implies that $\kappa_{\mu}= \varphi_{\mu}^{-1}$, which proves our theorem.
\end{proof}

The theorem above says that if $X$ is a space connected by paths, for any two points $x_0, x_1 \in X$, we have that  $\pi_1(X,x_0) \simeq \pi_1(X,x_1)$.

\begin{definition}
 
 Let $X$ and $Y$ be spaces connected by paths, $\mu:(X,x_0) \rightarrow (Y,y_0)$ be a continuous map that carries the point $x_0 \in X$ to the point $y_0 \in Y$, that is, $x_{0}\underset{\mu}{=} y_{0}$. Define the map
 
 $$\mu_{*}:\pi_1(X,x_0) \longrightarrow \pi_{1}(Y,y_0)$$
 
 by 
 
 $$\mu_{*}([\alpha_{x_0}])= [\sigma(\mu)]*[\alpha_{x_0}]*[\mu]=\tau \big(\tau \big(\sigma(\mu),\alpha_{x_0}\big),\mu \big),$$
 
where $\alpha_{x_0}$ is a loop in $X$ with base point $x_0$,      $y_{0}\underset{\sigma(\mu)}{=} x_{0}$ and note that  $\mu_{*}([\alpha_{x_0}]) \in \pi_{1}(Y,y_0).$
\end{definition}

\begin{corollary}
    The map $\mu_*$ is a homomorphism. 
\end{corollary}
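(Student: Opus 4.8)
The plan is to observe that $\mu_*$ is \emph{defined} by exactly the same expression on labels as the map $\varphi_{\mu}$ of the preceding theorem, namely $\mu_*([\alpha_{x_0}]) = \tau\big(\tau(\sigma(\mu),\alpha_{x_0}),\mu\big)$, the only difference being that the path $\mu$ now records the datum $x_{0}\underset{\mu}{=} y_{0}$ (that the continuous map carries $x_0$ to $y_0$) rather than a path internal to a single space. Since the homomorphism part of the proof of the isomorphism theorem never inspects the underlying space — it manipulates only labels through the rewrite rules of $LND_{EQ}$-$TRS$ — that argument transfers essentially verbatim.

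Concretely, I would take two loops $\alpha_{x_0},\beta_{x_0}$ based at $x_0 \in X$ and write out the composite
\[
\mu_*([\alpha_{x_0}]) * \mu_*([\beta_{x_0}]) = \tau\Big(\tau\big(\tau(\sigma(\mu),\alpha_{x_0}),\mu\big),\ \tau\big(\tau(\sigma(\mu),\beta_{x_0}),\mu\big)\Big).
\]
Then I would replay the same chain of $rw$-contractions used above in the proof that $\varphi_{\mu}$ is a homomorphism: an application of $tt$ on the right block, two further $tt$ reassociations, a $\sigma(tt)$ step producing the subterm $\tau(\mu,\sigma(\mu))$, a $tst$ (resp.\ $tsr$) step rewriting that subterm to $\rho$, a $tlr$ step deleting the $\rho$, and finally $tt$ and $\sigma(tt)$ to collect everything into the form $\tau\big(\tau(\sigma(\mu),\tau(\alpha_{x_0},\beta_{x_0})),\mu\big)$, which is precisely $\mu_*([\alpha_{x_0}*\beta_{x_0}])$. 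Hence $\mu_*([\alpha_{x_0}]) * \mu_*([\beta_{x_0}]) =_{rw} \mu_*([\alpha_{x_0}*\beta_{x_0}])$, so $\mu_*$ preserves the group operation.

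I would then note the two routine facts that make this statement meaningful: $\mu_*$ is well defined on $rw$-equivalence classes because $\tau$ and $\sigma$ are, and $\mu_*([\alpha_{x_0}])$ is genuinely an element of $\pi_1(Y,y_0)$ — a loop based at $y_0$ — which is where one uses that $\mu$ is a legitimate computational path $x_{0}\underset{\mu}{=} y_{0}$ with $y_{0}\underset{\sigma(\mu)}{=} x_{0}$, composable with loops pushed along it.

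The only real obstacle is conceptual rather than computational: one must be satisfied that no step of the rewrite sequence secretly exploited that the source and target coincided in the isomorphism theorem. This is immediate once one observes that each rule invoked ($tt$, $\sigma(tt)$, $tst$/$tsr$, $tlr$) is a purely syntactic rewrite whose applicability depends only on the shape of the label and on the typings $x_{0}\underset{\mu}{=} y_{0}$ and $y_{0}\underset{\sigma(\mu)}{=} x_{0}$, all of which hold here, so no appeal to $X = Y$ is ever made. Note that injectivity or surjectivity of $\mu_*$ is \emph{not} being claimed and indeed fails for a general continuous $\mu$; only the homomorphism property survives the passage from a path in one space to a map between spaces.
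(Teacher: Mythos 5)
Your proposal is correct and matches the paper's own argument: the paper likewise proves the homomorphism property by a chain of $LND_{EQ}$-$TRS$ rewrites relating $\mu_*([\alpha_{x_0}]*[\beta_{x_0}])$ and $\mu_*([\alpha_{x_0}])*\mu_*([\beta_{x_0}])$, exploiting that $\mu_*$ has the same label expression $\tau\big(\tau(\sigma(\mu),\alpha_{x_0}),\mu\big)$ as $\varphi_\mu$. The only cosmetic difference is direction: the paper expands $\mu_*([\alpha_{x_0}]*[\beta_{x_0}])$ outward (using $\sigma(tlr)$ and $\sigma(tr)$ to insert the $\tau(\mu,\sigma(\mu))$ factor) whereas you contract the product inward, which is equivalent since $rw$-equality is symmetric.
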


\begin{proof}

 The map $\mu_*$ is in fact a homomorphism, let $([\alpha_{x_0}],([\beta_{x_0}] \in \pi_1(X,x_0)$, we need to prove that $\mu_{*}([\alpha_{x_0}]*[\beta_{x_0}])=\mu_{*}([\alpha_{x_0}])*\mu_{*}([\beta_{x_0}])$.
 
\begin{eqnarray*}
\mu_{*}\Big([\alpha_{x_0}]*[\beta_{x_0}]\Big) &=& [\sigma(\mu)]*([\alpha_{x_0}]*[\beta_{x_0}])*[\mu]  \\
 &=&\tau \Bigg(\tau \bigg(\sigma(\mu),\tau\big(\alpha_{x_0},\beta_{x_0}\big)\bigg),\mu \Bigg) \\
 &\underset{\sigma(tt)}{=}&\tau \Bigg(\tau \bigg(\tau\big(\sigma(\mu),\alpha_{x_0}\big),\beta_{x_0}\bigg),\mu \Bigg)\\
 &\underset{\sigma(tlr)}{=}&\tau \Bigg(\tau \bigg(\tau\big(\sigma(\mu),\alpha_{x_0}\big),\tau\big(\rho_{x_0},\beta_{x_0}\big)\bigg),\mu \Bigg)\\
 &\underset{\sigma(tr)}{=}&\tau \Bigg(\tau \bigg(\tau\big(\sigma(\mu),\alpha_{x_0}\big),\tau\Big(\tau\big(\mu,\sigma(\mu)\big),\beta_{x_0}\Big)\bigg),\mu \Bigg)\\
 &\underset{tt}{=}&\tau \Bigg(\tau \bigg(\tau\big(\sigma(\mu),\alpha_{x_0}\big),\tau\Big(\mu,\tau\big(\sigma(\mu),\beta_{x_0}\big)\Big),\mu \Bigg)\\
 &\underset{\sigma(tt)}{=}&\tau\Bigg( \tau\bigg( \tau\Big(\tau \big(\sigma(\mu),\alpha_{x_0}\big),\mu\Big),\tau\big(\sigma(\mu),\beta_{x_0}\big)    \bigg) ,\mu \Bigg)\\
 &\underset{tt}{=}&\tau\Bigg( \tau\Big( \tau \big(\sigma(\mu),\alpha_{x_0}\big),\mu\Big), \tau\Big( \tau\big(\sigma(\mu),\beta_{x_0}\big),\mu\Big) \Bigg)\\
 &=& \mu_{*}\big([\alpha_{x_0}]\big)*\mu_{*}\big([\beta_{x_0}]\big). \\   
\end{eqnarray*}

In this way, we can conclude that $\mu_{*}$ is indeed a homomorphism and $\mu_{*}$ is called homomorphism induced by $\mu$.
\end{proof}

It follows as a consequence of the previous definition that if $i:(X,x_0) \rightarrow (X,x_0)$ is the identity map, then $i_{*}$ is the identity homomorphism. Just check that $i=\sigma(i)=\rho_{x_0}$ and

\begin{eqnarray*}
i_{*}\Big([\alpha_{x_0}]\Big) &=& [\sigma(i)]*[\alpha_{x_0}]*[i]  \\
 &=&\tau \Big(\tau \big(\sigma(i),\alpha_{x_0}\big),i \Big) \\
 &=&\tau \Big(\tau \big(\rho_{x_0},\alpha_{x_0}\big),\rho_{x_0} \Big) \\
 &\underset{tlr}{=}&\tau(\alpha_{x_0},\rho_{x_0})\\
 &\underset{trr}{=}&\alpha_{x_0}.\\
\end{eqnarray*}

\begin{theorem}
If $\mu:(X,x_0) \rightarrow (Y,y_0)$ and $\Delta:( Y,y_0) \rightarrow (Z,z_0)$ are continuous maps, then $(\mu*\Delta)_{*}=\mu_{*}*\Delta_{*}$. 
\end{theorem}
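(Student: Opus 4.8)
The plan is to evaluate both homomorphisms on an arbitrary element $[\alpha_{x_0}] \in \pi_1(X,x_0)$ and to show that the two resulting paths in $\pi_1(Z,z_0)$ are $rw$-equal, exactly in the style of the preceding corollary. Recall first that the composite path is $\mu * \Delta = \tau(\mu,\Delta)$, so that $x_0 =_{\tau(\mu,\Delta)} z_0$, and that the composite homomorphism $\mu_* * \Delta_*$ must be read as $\Delta_* \circ \mu_*$ — this is the only reading for which the domains and codomains match, and it is precisely what the diagrammatic convention for $*$ denotes.

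First I would unfold the left-hand side using the definition of the induced homomorphism: $(\mu*\Delta)_*([\alpha_{x_0}]) = [\sigma(\tau(\mu,\Delta))] * [\alpha_{x_0}] * [\tau(\mu,\Delta)] = \tau(\tau(\sigma(\tau(\mu,\Delta)),\alpha_{x_0}),\tau(\mu,\Delta))$. The key ingredient here is the $rw$-rule of $LND_{EQ}$-$TRS$ (listed in appendix B) that reverses a composition, $\sigma(\tau(t,s)) \rhd \tau(\sigma(s),\sigma(t))$; applying it with $t = \mu$ and $s = \Delta$ rewrites the left-hand side to $\tau(\tau(\tau(\sigma(\Delta),\sigma(\mu)),\alpha_{x_0}),\tau(\mu,\Delta))$.

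Next I would unfold the right-hand side: $\mu_*([\alpha_{x_0}]) = \tau(\tau(\sigma(\mu),\alpha_{x_0}),\mu)$, and then $\Delta_*(\mu_*([\alpha_{x_0}])) = \tau(\tau(\sigma(\Delta),\tau(\tau(\sigma(\mu),\alpha_{x_0}),\mu)),\Delta)$. It then remains to check that this path and the one obtained from the left-hand side are $rw$-equal, which is pure reassociation: using the associativity rules $tt$ and $\sigma(tt)$ repeatedly, both expressions reduce to the fully left-associated path $\tau(\tau(\tau(\tau(\sigma(\Delta),\sigma(\mu)),\alpha_{x_0}),\mu),\Delta)$. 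Since $LND_{EQ}$-$TRS$ is terminating and confluent, exhibiting this common form suffices; indeed one may simply invoke confluence once both sides have been written purely in terms of $\tau$ and $\sigma$ applied to $\sigma(\Delta),\sigma(\mu),\alpha_{x_0},\mu,\Delta$. As $[\alpha_{x_0}]$ was arbitrary, $(\mu*\Delta)_*$ and $\mu_* * \Delta_*$ agree as maps $\pi_1(X,x_0) \to \pi_1(Z,z_0)$.

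The main obstacle is not any genuine calculation — everything reduces to bookkeeping with $\tau$ — but rather two points of care: fixing the convention so that $\mu_* * \Delta_*$ means $\Delta_*\circ\mu_*$ (otherwise the composite is not even well-typed), and correctly invoking the reverse-of-composition rule for $\sigma(\tau(\mu,\Delta))$, since this is the only step that is not plain associativity. Once these are in place, the argument is a mechanical reassociation of a string of $\tau$'s, and confluence of the term rewriting system removes any need to commit to a particular chain of contractions.
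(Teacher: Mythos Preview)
Your proposal is correct and follows essentially the same approach as the paper: both arguments unfold the definitions, invoke the rule $\sigma(\tau(r,s)) \rhd_{stss} \tau(\sigma(s),\sigma(r))$ to handle $\sigma(\mu*\Delta)$, and then reduce the rest to reassociation via $tt$ and $\sigma(tt)$. The only difference is presentational: the paper gives an explicit chain of rewrites from $(\mu_* * \Delta_*)([\alpha_{x_0}])$ to $(\mu*\Delta)_*([\alpha_{x_0}])$ (using $\sigma(stss)$ in the reverse direction near the end), whereas you reduce both sides to the common left-associated form $\tau(\tau(\tau(\tau(\sigma(\Delta),\sigma(\mu)),\alpha_{x_0}),\mu),\Delta)$ and appeal to confluence.
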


\begin{proof}
\begin{eqnarray*}
\Big(\mu_{*}*\Delta_{*}\Big)\big([\alpha_{x_0}]\big) &=& \Delta_{*}\Big(\Big[\mu_{*}\big([\alpha_{x_0}]\big)\Big]\Big)\\ &=&[\sigma(\Delta)]*\Big[\mu_{*}\big([\alpha_{x_0}]\big)\Big]*[\Delta]  \\ &=&[\sigma(\Delta)]*\Big[[\sigma(\mu)]*[\alpha_{x_0}]*[\mu])\Big]*[\Delta] \\
 &=&\tau \Bigg(\tau \bigg(\sigma(\Delta),\tau \Big(\tau(\sigma(\mu,)\alpha_{x_0}),\mu\Big)\bigg),\Delta \Bigg) \\
&\underset{\sigma(tt)}{=}&\tau \Bigg(\tau \bigg(\tau\Big(\sigma(\Delta), \tau(\sigma(\mu),\alpha_{x_0})\Big),\mu\bigg),\Delta \Bigg) \\
 &\underset{\sigma(tt)}{=}&\tau \Bigg(\tau \bigg(\tau\Big(\tau\big(\sigma(\Delta),\sigma(\mu)\big),\alpha_{x_0})\Big),\mu\bigg),\Delta \Bigg) \\
&\underset{tt}{=}&\tau\Bigg( \tau \bigg( \tau\big( \sigma(\Delta),\sigma(\mu)\big), \tau\big( \alpha_{x_0},\mu\big)\bigg),\Delta \Bigg) \\
&\underset{tt}{=}&\tau\Bigg(\tau\big( \sigma(\Delta),\sigma(\mu)\big), \tau\Big(\tau\big( \alpha_{x_0},\mu\big),\Delta\Big) \Bigg) \\
&\underset{tt}{=}&\tau\Bigg(\tau\big( \sigma(\Delta),\sigma(\mu)\big), \tau\Big( \alpha_{x_0},\tau\big(\mu,\Delta\big)\Big) \Bigg) \\
&\underset{\sigma(stss)}{=}&\tau\Bigg( \sigma\Big( \tau(\Delta,\mu)\Big), \tau\Big( \alpha_{x_0},\tau\big(\mu,\Delta\big)\Big) \Bigg) \\
&\underset{\sigma(tt)}{=}&\tau\Bigg( \tau\bigg(\sigma\Big( \tau(\Delta,\mu)\Big),\alpha_{x_0}\bigg),\tau\big(\mu,\Delta\big)\Bigg) \\
&=&\Big[ \sigma\Big(\tau(\Delta,\mu)\Big)\Big]*\Big[ \alpha_{x_0} \Big]*\Big[ \tau(\mu,\Delta) \Big]\\ 
&=&\Big[ \sigma\Big(\Delta*\mu\Big)\Big]*\Big[ \alpha_{x_0} \Big]*\Big[\mu*\Delta \Big]\\ 
&=&\Big( \mu*\Delta\Big)_{*}\big([\alpha_{x_0}]\big).
\end{eqnarray*}

\end{proof}

We have seen that if  $\mu:(X,x_0) \rightarrow (Y,y_0)$ is a continuous map, we obtain that $\mu_{*}$ is a homomorphism induced by $\mu$. But, in the particular case where $\mu$ is a homeomorphism, what can we say about the map $\mu_{*}$? This question will be answered with the next theorem that will be stated and demonstrated by computational paths. This result is also indispensable to fulfill our main objective.

\begin{theorem}
If $\mu:(X,x_0) \rightarrow (Y,y_0)$ is a homeomorphism of $X$ with $Y$, then $\mu_{*}:\pi_1(X,x_0) \rightarrow \pi_{1}(Y,y_0)$ is an isomorphism.
\end{theorem}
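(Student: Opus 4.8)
The plan is to follow the template already set by the proof that $\varphi_\mu$ is an isomorphism: exhibit an explicit two‑sided inverse of $\mu_*$ at the level of computational paths, and then promote ``bijective homomorphism'' to ``isomorphism'' by citing the Corollary, which already tells us $\mu_*$ is a homomorphism. Since $\mu$ is a homeomorphism, it possesses a continuous inverse $\mu^{-1}:(Y,y_0)\to(X,x_0)$ carrying $y_0$ to $x_0$; reading this map as a computational path we may take the path it induces to be $\sigma(\mu)$, in exact agreement with the convention in the Definition preceding the Corollary, where for a single map $\mu:(X,x_0)\to(Y,y_0)$ one already writes $y_0 \underset{\sigma(\mu)}{=} x_0$ for ``$\mu$ traversed backwards''. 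Hence $\mu^{-1}$ induces a homomorphism $(\mu^{-1})_*:\pi_1(Y,y_0)\to\pi_1(X,x_0)$ by the Corollary, and this is the candidate inverse of $\mu_*$.

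First I would apply the composition theorem $(\mu*\Delta)_*=\mu_**\Delta_*$ twice. With $\Delta=\mu^{-1}$ the composite $\mu*\mu^{-1}$ is the continuous map $\mathrm{id}_X:(X,x_0)\to(X,x_0)$, so $(\mu^{-1})_*\circ\mu_*=(\mu*\mu^{-1})_*=(\mathrm{id}_X)_*$; taking the maps in the opposite order, $\mu^{-1}*\mu=\mathrm{id}_Y$, so $\mu_*\circ(\mu^{-1})_*=(\mathrm{id}_Y)_*$. By the Remark following the composition theorem the identity map induces the identity homomorphism, whence $(\mu^{-1})_*\circ\mu_*=\mathrm{id}_{\pi_1(X,x_0)}$ and $\mu_*\circ(\mu^{-1})_*=\mathrm{id}_{\pi_1(Y,y_0)}$. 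Thus $\mu_*$ is a bijection, and being a homomorphism by the Corollary it is an isomorphism, i.e. $\pi_1(X,x_0)\simeq\pi_1(Y,y_0)$. Alternatively, and probably more in the spirit of the preceding proofs, one can bypass the composition theorem and verify the two inversion identities by a direct $rw$‑reduction just as was done for $\varphi_\mu$ and $\kappa_\mu$: expand $(\mu^{-1})_*\big(\mu_*([\alpha_{x_0}])\big)=\tau\big(\tau(\sigma(\mu^{-1}),\tau(\tau(\sigma(\mu),\alpha_{x_0}),\mu)),\mu^{-1}\big)$, reassociate with $tt$ and $\sigma(tt)$ until the pair $\tau(\sigma(\mu),\mu)$ (resp.\ $\tau(\mu,\sigma(\mu))$) becomes adjacent, collapse it with $tsr$ (resp.\ $tst$), and clear the resulting $\rho$'s with $tlr$ and $trr$, recovering $\alpha_{x_0}$; symmetrically for the other composite.

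The step I expect to be the real obstacle is precisely the one hidden inside ``$\mu*\mu^{-1}=\mathrm{id}_X$ induces the identity'': the composition theorem returns the path $\tau(\mu,\mu^{-1})$, not literally $\rho_{x_0}$, while the Remark on identity maps was argued for the path $\rho_{x_0}$. So one must genuinely use that $\mu$ is a homeomorphism — not merely a set bijection — to know that $\mu^{-1}$ cancels $\mu$ at the level of computational paths, i.e.\ that $\tau(\mu,\sigma(\mu))\rhd_{rw}\rho_{x_0}$ and $\tau(\sigma(\mu),\mu)\rhd_{rw}\rho_{x_0}$; these are exactly the $tst$ and $tsr$ rules of $LND_{EQ}\text{-}TRS$. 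Once this cancellation is secured, the remainder is the routine bracket‑reassociation already rehearsed in the proof that $\varphi_\mu$ is an isomorphism, so no new difficulty arises beyond bookkeeping.
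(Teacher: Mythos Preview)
Your proposal is correct, and in fact you have described \emph{both} viable routes. The paper takes your ``alternative'' route: it writes the inverse path as $\sigma(\mu)$, defines $\sigma(\mu)_*([\alpha_{y_0}])=\tau(\tau(\mu,\alpha_{y_0}),\sigma(\mu))$, and then performs exactly the direct $rw$-reduction you sketch---reassociating with $\sigma(tt)$ and $tt$, collapsing $\tau(\mu,\sigma(\mu))$ to $\rho_{x_0}$, and clearing with $trr$ and $tlr$---to obtain $\sigma(\mu)_*\big(\mu_*([\alpha_{x_0}])\big)=[\alpha_{x_0}]$, with the other composite declared analogous. One small slip: the rule that collapses $\tau(\mu,\sigma(\mu))$ is $tr$ (rule~3), not $tst$; $tsr$ is indeed the one for $\tau(\sigma(\mu),\mu)$.

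Your primary route via the composition theorem $(\mu*\Delta)_*=\mu_**\Delta_*$ and the identity remark is a genuinely different, more structural argument: it reuses Theorem~2 and the observation after Corollary~1 rather than repeating a bespoke rewrite chain, so it is shorter and makes the functoriality of $(\,\cdot\,)_*$ do the work. The paper's direct computation, by contrast, stays entirely within the $LND_{EQ}$-$TRS$ calculus and does not invoke Theorem~2 at all. Either is acceptable; your anticipated ``obstacle''---that $\tau(\mu,\sigma(\mu))$ must still be $rw$-reduced to $\rho_{x_0}$ before the identity remark applies---is real but, as you note, is discharged immediately by $tr$/$tsr$, so it is not a gap.
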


\begin{proof}

Let $\sigma(\mu)$ be the inverse of $\mu$, so we can write $\sigma(\mu):(Y,y_0) \rightarrow (X,x_0)$. We must prove that the map

$$\sigma(\mu)_{*}:\pi_{1}(Y,y_0) \rightarrow \pi_1(X,x_0)$$

given by:

$$\sigma(\mu)_{*}\big([\alpha_{y_0}]\big)=[\mu]*[\alpha_{y_0}]*[\sigma(\mu)]=\tau\Big(\tau(\mu,\alpha_{y_0}),\sigma(\mu)\Big).$$

Once $\mu_{*}([\alpha_{x_0}]) \in \pi_{1}(Y,y_0)$, we have

\begin{eqnarray*}
\sigma(\mu)_{*}\Big(\big[\mu_{*}([\alpha_{x_0}])\big]  \Big) &=&[\mu]*\big[\mu_{*}([\alpha_{x_0}])\big]*[\sigma(\mu)]\\
 &=&[\mu]*\big[[\sigma(\mu)]*[\alpha_{x_0}]*[\mu]\big]*[\sigma(\mu)] \\
  &=&\tau \Bigg( \tau\bigg(\mu,\tau\Big(\tau(\sigma(\mu),\alpha_{x_0}),\mu  \Big)    \bigg),\sigma(\mu)\Bigg)\\
   &\underset{\sigma(tt)}{=}&\tau \Bigg( \tau\bigg( \tau\Big(\mu,\tau\big(\sigma(\mu),\alpha_{x_0}\big)\Big),\mu    \bigg),\sigma(\mu)\Bigg)\\
    &\underset{tt}{=}&\tau \Bigg(   \tau\bigg(\mu, \tau \Big( \sigma(\mu),\alpha_{x_0} \Big)     \bigg)   ,\tau\Big(\mu,\sigma(\mu)\Big) \Bigg)\\
     &\underset{tr}{=}&\tau \Bigg(   \tau\bigg(\mu, \tau \Big( \sigma(\mu),\alpha_{x_0} \Big)     \bigg)   ,\rho_{x_0} \Bigg)\\
     &\underset{trr}{=}& \tau\bigg(\mu, \tau \Big( \sigma(\mu),\alpha_{x_0} \Big)     \bigg) \\
      &\underset{\sigma(tt)}{=}& \tau\bigg( \tau\Big(\mu,\sigma(\mu)\Big),\alpha_{x_0} \bigg) \\
      &\underset{tr}{=}& \tau\Big(\rho_{x_0},\alpha_{x_0}\Big)  \\
      &\underset{tlr}{=}& [\alpha_{x_0}]\\.
\end{eqnarray*}
 Proceeding in an analogous way, we will obtain that $\mu_{*}\Big(\big[\sigma(\mu)_{*}([\alpha_{y_0}])\big]  \Big)=[\alpha_{y_0}]$. Therefore, have that $\mu_{*}:\pi_1(X,x_0) \longrightarrow \pi_{1}(Y,y_0)$ is a isomorphism, which confirms our proof.  
\end{proof}
Now, consider the following definition:

\begin{definition}[Deformation retract]
 A subspace $A$ of $X$ is called a deformation retract of $X$ if the identity application of $X$ is homotopic to the application which carries all points of $X$ into $A$ in such a way that each point of $A$ remains fixed during homotopy. This is equivalent to the existence of an $H$ application as follows:

 Consider the application $H: X \times I \rightarrow X$ defined by:
 
\end{definition}

$$H(x,t) = \begin{cases}

H(x,0)=x &  \\
H(x,1)= r(x) \\
H(x,t)=x, \forall a \in A.  \\
\end{cases}$$

 Where $r: X \longrightarrow A$ is given by: $r(x)=H(x,1)$, $\forall x\in X$. In the next figure, let $A$ be the curve in red and $X$ all region bounded by the black curve. If $x \in X$ so the application $r(x)$ carrier this point to point in $A$ as show figure 7.

\begin{figure}[!htb]
\centering
\includegraphics[width=0.5\columnwidth]{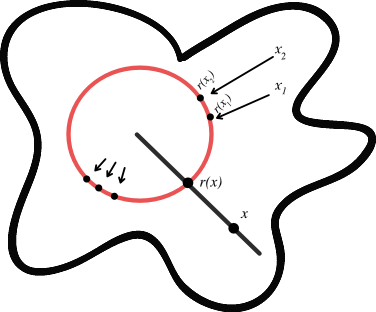}
\caption{Deformation retract of $X$ in $A$ by map $r(x)$. } 
\label{Rotulo10}
\end{figure}

This way we can define $H$ by:

\begin{eqnarray*}
&H&: X \times I \longrightarrow X\\
&H&(x,t) = (1-t)x+tr(x)
\end{eqnarray*}

\begin{lemma}
Let $h,k:(X,x_0) \rightarrow (Y,y_0)$ be continuous and homotopic maps. If $y_0$  is the image of the base point $x_0$ which remains fixed during the homotopy, then $h_{*}=k_{*}$.
\end{lemma}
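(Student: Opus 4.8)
The plan is to reduce the statement to a pointwise claim about loops and then feed the homotopy through the ``compose with a loop'' construction. Since the preceding corollary shows that $h_{*}$ and $k_{*}$ are both homomorphisms $\pi_1(X,x_0)\to\pi_1(Y,y_0)$, and both are defined on the whole of $\pi_1(X,x_0)$, it suffices to prove $h_{*}([\alpha_{x_0}])=k_{*}([\alpha_{x_0}])$ for an arbitrary loop $\alpha_{x_0}$ with base point $x_0$. First I would unfold both sides by the definition of the induced homomorphism, so that the goal becomes $\tau(\tau(\sigma(h),\alpha_{x_0}),h)=_{rw}\tau(\tau(\sigma(k),\alpha_{x_0}),k)$.

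Next I would use the homotopy $H:X\times I\to Y$ from $h$ to $k$. The central move is to precompose it with the loop: set $G(s,t)=H(\alpha_{x_0}(s),t)$, a continuous map $I\times I\to Y$. Then $G(s,0)=h(\alpha_{x_0}(s))$ and $G(s,1)=k(\alpha_{x_0}(s))$, while the hypothesis that the base point remains fixed, i.e.\ $H(x_0,t)=y_0$ for all $t$, combined with $\alpha_{x_0}(0)=\alpha_{x_0}(1)=x_0$, gives $G(0,t)=G(1,t)=y_0$. Hence $G$ is a homotopy rel endpoints between the loops $h\circ\alpha_{x_0}$ and $k\circ\alpha_{x_0}$ at $y_0$. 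Read inside $LND_{EQ}$-$TRS$, this homotopy supplies the computational-path identification we need; equivalently, the existence of a base-point-preserving homotopy between $h$ and $k$ forces $h=_{rw}k$ as computational paths from $x_0$ to $y_0$, hence $\sigma(h)=_{rw}\sigma(k)$, and then, since the path operations $\tau$ and $\sigma$ are well defined on $rw$-equivalence classes (as already used in the definition of $\varphi_{\mu}$, and relying on Proposition~\ref{proposition3.7}), the two nested $\tau$-terms above are $rw$-equal. As $\alpha_{x_0}$ was arbitrary, this yields $h_{*}=k_{*}$.

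The step I expect to be the real obstacle is precisely this translation between the topological side and the syntactic side: justifying rigorously that a base-point-fixing homotopy $H$ produces an $rw$-equality (or a $2$-dimensional computational path) between $h$ and $k$, and hence between $h\circ\alpha_{x_0}$ and $k\circ\alpha_{x_0}$, rather than merely an informal identification. Once that is pinned down, the remainder is a routine congruence argument: propagate $[\sigma(h)]=[\sigma(k)]$ and $[h]=[k]$ through each argument of $\tau(\tau(\sigma(\cdot),\alpha_{x_0}),\cdot)$ using well-definedness of the path operations, with no genuine calculation required. This lemma is then exactly what is needed to transport $\pi_1$ along the deformation retracts appearing in the computations of the following subsection.
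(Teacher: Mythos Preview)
The paper does not actually prove this lemma: it is stated without proof and immediately used as a black box in the proof of Theorem~4 (that a deformation retract induces an isomorphism on $\pi_1$). So there is no argument in the paper to compare your attempt against.

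Your outline is the standard topological proof (precompose $H$ with the loop to obtain a rel-endpoints homotopy of loops at $y_0$), adapted to the paper's nonstandard definition $\mu_*([\alpha_{x_0}])=\tau(\tau(\sigma(\mu),\alpha_{x_0}),\mu)$. The genuine gap you flag is real and is not resolved anywhere in the paper either: the $rw$-equality relation is defined purely syntactically via the $LND_{EQ}$-$TRS$ rewrite rules, and nothing in the paper's formal apparatus licenses converting an arbitrary base-point-preserving topological homotopy $H$ into an $rw$-equality between the corresponding computational paths. Your congruence argument, once that bridge is granted, is fine; but the bridge itself is an assumption that the paper tacitly imports from classical algebraic topology rather than derives within its own system. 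In that sense you have gone further than the paper, which simply asserts the lemma.
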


\begin{theorem}
Let $A$ be a deformation retract of $X$ and $x_0 \in A$. Then the inclusion map $i:(A,x_0)\rightarrow (X,x_0)$ induces an isomorphism of fundamental group.
\end{theorem}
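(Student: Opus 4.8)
The plan is to exhibit the inclusion $i\colon(A,x_0)\to(X,x_0)$ as half of a homotopy equivalence coming from the deformation retract, and then transport everything along the induced-homomorphism construction $(-)_*$ using the results already proved in this section. Let $r\colon X\to A$, $r(x)=H(x,1)$, be the retraction supplied by the deformation retract. First I would record two facts. Reading $\mu*\Delta$ as ``first $\mu$, then $\Delta$'' (the convention used in the composition theorem), the composite $i*r\colon(A,x_0)\to(A,x_0)$ is $r\circ i$, which is the identity map of $A$ because $r$ fixes every point of $A$; and the composite $r*i\colon(X,x_0)\to(X,x_0)$ is $i\circ r$, for which the map $H\colon X\times I\to X$ of the deformation retract is exactly a homotopy from $\mathrm{id}_X$ (at $t=0$) to $i\circ r$ (at $t=1$). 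Moreover the condition $H(a,t)=a$ for all $a\in A$ gives, at $a=x_0\in A$, that $H(x_0,t)=x_0$, so this homotopy keeps the base point fixed throughout.

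Next I would feed these two identities into the functorial machinery. Applying the composition theorem $(\mu*\Delta)_*=\mu_**\Delta_*$ to $i*r$ and using the observation (following the Corollary) that the identity map induces the identity homomorphism, I obtain $i_**r_*=(\mathrm{id}_A)_*=\mathrm{id}_{\pi_1(A,x_0)}$, that is, $r_*\circ i_*=\mathrm{id}_{\pi_1(A,x_0)}$. Applying the same theorem to $r*i$ and invoking the Lemma on homotopic maps with a fixed base point---which is legitimate by the previous paragraph---I obtain $r_**i_*=(i\circ r)_*=(\mathrm{id}_X)_*=\mathrm{id}_{\pi_1(X,x_0)}$, that is, $i_*\circ r_*=\mathrm{id}_{\pi_1(X,x_0)}$.

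These two equalities say precisely that $r_*$ is a two-sided inverse of $i_*$; hence $i_*$ is a bijection, and since it is a homomorphism by the Corollary (being the homomorphism induced by the continuous map $i$) it is an isomorphism, so $\pi_1(A,x_0)\simeq\pi_1(X,x_0)$.

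The only delicate point is the verification that the homotopy witnessing $i\circ r\simeq\mathrm{id}_X$ is base-point-preserving, since that is exactly what licenses the use of the Lemma; the rest is a purely formal consequence of functoriality of $(-)_*$ and its value on identity maps, so no further computation with $\tau$, $\sigma$, or the $rw$-rules is required beyond what has already been done in the proofs above. I would also make explicit that $i$ and $r$ are genuine maps of pointed spaces---this holds because $x_0\in A$ and $r(x_0)=x_0$---so that all the cited theorems apply verbatim.
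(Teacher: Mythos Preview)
Your proposal is correct and follows essentially the same approach as the paper: show that $i*r$ is the identity on $A$ so that $(i*r)_*=i_**r_*$ is the identity on $\pi_1(A,x_0)$, use the deformation-retract homotopy $H$ together with the lemma on basepoint-preserving homotopies to conclude that $(r*i)_*=r_**i_*$ is the identity on $\pi_1(X,x_0)$, and infer that $i_*$ has $r_*$ as a two-sided inverse. You are in fact a bit more explicit than the paper in checking that $H$ fixes the base point and that $i,r$ are pointed maps, but the structure of the argument is the same.
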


\begin{proof}
Let $r:X\rightarrow A$ be a deformation retract, if we consider the composition $i*r: A \rightarrow A$ so  $(i*r)=Id_{A}$ and therefore  $(i*r)_{*}=i_{*}*r_{*}$ is the identity homomorphism of $\pi_{1}(A,x_0)$. 

On the other hand the composition $r*i:X \rightarrow X$  is not the $Id_{X}$ because $A\subset X$, so consider the map $H: X \times I \longrightarrow X$ defined by:
 
$$H(x,t) = \begin{cases}

H(x,0)=x, \quad \quad  \qquad \forall x \in X. &  \\
H(x,1)= (r*i)(x) \in A, \quad \forall x \in X. \\
H(x_0,t)=x_0, \quad \quad\quad \forall x_0 \in A.  \\
\end{cases}$$ 

So $H$ is a homotopy between $Id_{X}$ and $r*i$, then by lemma 1 we have that  $ (r*i)_{*}= r_{*}*i_{*}$ is the identity homomorphism  of $\pi_1(X,x_0)$.

Therefore, since $r_{*}*i_{*}:\pi_{1}(X,x_0) \rightarrow \pi_{1}(X,x_0)$ is the identity homomorphism, it follows that $i_*$ and $r_*$ are isomorphisms. In other words, the fundamental group of the deformation retract $A$ is isomorphic to the fundamental group of $X$.
\end{proof}

In this work, we wish to determine the fundamental group of a topological space $X$ that is written as the union of two open subsets $U$ and $V$ with both path-connected. If $U\cap V$ is path-connected and let $x_0\in U\cap V$, then the natural morphism $k$ is an isomorphism, that is, the fundamental group of $X$ is the free product of the fundamental groups of $U$ and $V$ with amalgamation of $ \displaystyle\pi_{1}(U\cap V ,x_{0})$.
In what follows, we will need the following theorem to proceed with our main objective:


\begin{theorem}[Seirfet-Van Kampen Theorem]
Let $X=U\cup V$, where $U$ and $V$ are open subsets in $X$, with both path-connected. Suppose that $U \cap V$ are path-connected and let $x_0 \in U\cap V$ be a base point.  The inclusion maps of $U$ and $V$ into $X$ induce group homomorphisms ${\displaystyle j_{1}:\pi _{1}(U,x_{0})\to \pi _{1}(X,x_{0})}$ and ${\displaystyle j_{2}:\pi _{1}(V,x_{0})\to \pi _{1}(X,x_{0})}$. Then $X$ is path connected and ${\displaystyle j_{1}}$ and ${\displaystyle j_{2}}$ form a commutative pushout diagram:

$$
\begin{tikzcd}[column sep=tiny]
& \pi_1(U,x_0) \ar[dr] \ar[drr, "j_1", bend left=20]
&
&[1.5em] \\
\pi_1(U\cap V) \ar[ur, "i_1"] \ar[dr, "i_2"']
&
& \pi_1(U) \ast_{ \pi_1(U\cap V)} \pi_1(V) \ar[r, dashed, "k"]
& \pi_1(X) \\
& \pi_1(V) \ar[ur]\ar[urr, "j_2"', bend right=20]
&
&
\end{tikzcd}
$$

Then, the natural morphism $k$ is an isomorphism, that is, the fundamental group of $X$ is the free product of the fundamental groups of $U$ and $V$ with amalgamation of ${\displaystyle \pi _{1}(U\cap V,x_{0})}$.
\end{theorem}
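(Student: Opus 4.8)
The plan is to follow the classical proof of the Seifert--Van Kampen theorem, but with every homotopy phrased as a computational path and the bookkeeping identities discharged by the rewrites of $LND_{EQ}$-$TRS$. First I would settle path-connectedness of $X$: given $p,q \in X$, each lies in $U$ or in $V$; if both lie in the same open set we are done by its path-connectedness, and otherwise we fix $z \in U \cap V$, a path $p =_{\alpha} z$ inside $U$ and a path $z =_{\beta} q$ inside $V$, and form $p =_{\tau(\alpha,\beta)} q$. The morphism $k$ is then the one supplied by the universal property of the pushout of groups that defines $\pi_1(U,x_0) \ast_{\pi_1(U\cap V,x_0)} \pi_1(V,x_0)$: since $j_1$ and $j_2$ are induced by the composite inclusions $U\cap V \hookrightarrow U \hookrightarrow X$ and $U\cap V \hookrightarrow V \hookrightarrow X$, the functoriality established above (the identity $(\mu \ast \Delta)_{*} = \mu_{*} \ast \Delta_{*}$) gives $j_1 \circ i_1 = j_2 \circ i_2$, so a unique group homomorphism $k$ exists; it remains to prove that $k$ is bijective.

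For surjectivity I would take a loop $x_0 =_{f} x_0$ in $X$. The sets $f^{-1}(U)$ and $f^{-1}(V)$ form an open cover of the compact metric space $I$, so by the Lebesgue number lemma there is a partition $0 = t_0 < t_1 < \cdots < t_n = 1$ with each restriction $f|_{[t_{i-1},t_i]}$ having image in $U$ or in $V$; after merging consecutive subintervals of the same type we may assume every interior point $f(t_i)$ lies in $U \cap V$. Using path-connectedness of $U \cap V$ we choose paths $x_0 =_{\gamma_i} f(t_i)$ inside $U \cap V$, with $\gamma_0 = \gamma_n = \rho_{x_0}$. Inserting the redundant segments $\tau(\sigma(\gamma_i),\gamma_i)$, which $rw$-reduce to $\rho$, together with a reparametrisation identifying $\tau(f|_{[t_0,t_1]},\tau(\dots,f|_{[t_{n-1},t_n]}))$ with $f$, one sees that $f$ is $rw$-equal to the concatenation of the loops $g_i := \tau(\tau(\gamma_{i-1}, f|_{[t_{i-1},t_i]}), \sigma(\gamma_i))$, each a loop at $x_0$ lying wholly in $U$ or in $V$. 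Hence $[f]$ is a product of elements of $j_1(\pi_1(U,x_0))$ and $j_2(\pi_1(V,x_0))$, so $[f]$ is in the image of $k$.

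For injectivity I would suppose a word $w$ in $\pi_1(U,x_0) \ast \pi_1(V,x_0)$ maps to the identity of $\pi_1(X,x_0)$; representing $w$ by the concatenated loop $f$, this says that $f$ is path-homotopic to $\rho_{x_0}$ in $X$ via some homotopy $H : I \times I \to X$. Applying the Lebesgue number lemma to the open cover $\{H^{-1}(U), H^{-1}(V)\}$ of the square $I \times I$, I would obtain a grid subdivision in which every small rectangle maps into $U$ or into $V$. Choosing, for each grid vertex $v$, a path from $x_0$ to $H(v)$ that stays in $U \cap V$ whenever $H(v) \in U \cap V$ (and inside the relevant single open set otherwise), each small rectangle produces an elementary $rw$-equality between two adjacent edge-words, carried out entirely inside $U$ or inside $V$; composing these elementary steps across the grid rewrites $w$ to the empty word using only moves available in $\pi_1(U)$, in $\pi_1(V)$, and the amalgamation identifications $i_1(h) = i_2(h)$ for $h \in \pi_1(U \cap V, x_0)$. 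Thus $w$ is already trivial in the amalgamated free product, so $k$ is injective, and therefore an isomorphism.

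I expect the injectivity step to be the main obstacle: its genuine content is the two-dimensional subdivision of the null-homotopy together with the verification that reading off the grid square by square introduces no relation beyond the amalgamation ones. In the present framework this amounts to organising the elementary square-moves as a finite sequence of $rw$-contractions and their inverses on a word of computational paths, and invoking the termination and confluence of $LND_{EQ}$-$TRS$ to guarantee that the normal form reached is the trivial loop. The path-connectedness and surjectivity parts, by contrast, are short and essentially reuse arguments already appearing in this section.
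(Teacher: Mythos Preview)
The paper does not prove this theorem. It is stated as a known external result (``In what follows, we will need the following theorem to proceed with our main objective'') and is then invoked as a black box in the subsequent computations for the Klein bottle, the torus, and the two-holed torus. There is therefore no proof in the paper against which your proposal can be compared.

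On its own merits, your outline is the standard Hatcher/Munkres argument and is broadly sound at the level of a sketch: path-connectedness, the Lebesgue-number subdivision for surjectivity, and the grid subdivision of the null-homotopy for injectivity are exactly the classical ingredients. Two cautions. First, the attempt to recast the argument in the language of computational paths and $LND_{EQ}\text{-}TRS$ is largely cosmetic here: the real work in injectivity is topological (existence of the homotopy $H$ and the grid decomposition), and no amount of confluence or termination of the term-rewriting system can supply that content; invoking those properties ``to guarantee that the normal form reached is the trivial loop'' is not a valid step, since $LND_{EQ}\text{-}TRS$ normalises \emph{labels}, not homotopy classes in an arbitrary space $X$. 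Second, in the injectivity step one must be careful that the connecting paths at grid vertices are chosen coherently so that edges shared by two small rectangles carry the \emph{same} loop in the amalgamated product; otherwise the square-by-square rewriting does not compose. With those points addressed, your sketch would become a correct classical proof, but it goes well beyond what the paper itself undertakes.
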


\subsection{Fundamental Group of the Klein bottle - $\pi_{1}(\mathbb{K}^2,x_{0})$}

 Our objective here is compute the fundamental group of the Klein bottle. We will prove it using computational paths, the results obtained in the previous subsection and the Van Kampen Theorem as we will show in the sequel. 
 
 Consider $\mathbb{K}^{2}$ as the surface known as Klein bottle and the point $x_{0}\in \mathbb{K}^{2}$ as in the following figure.

\begin{figure}[H]
\centering
\includegraphics[width=0.3\columnwidth]{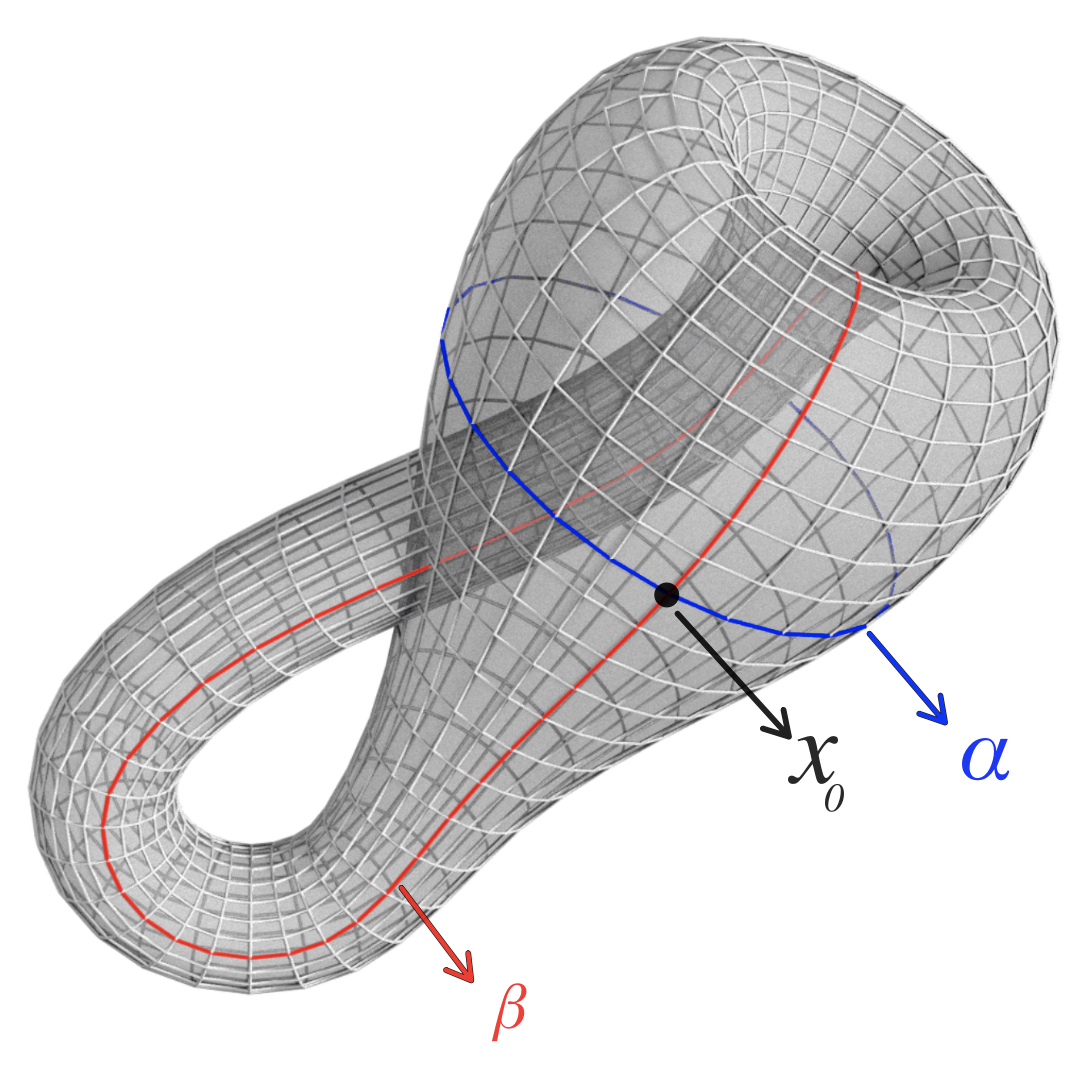}
\caption{ Paths (loops) $\alpha$ and $\beta$ with base point $x_0$ in $\mathbb{K}^{2}$  } 
\label{Rotulo11}
\end{figure}

Given a point $x_{0}$, we can slice the Klein bottle
and represent it as a square whose sides are the loops $\alpha$ and $\beta$, how show in figure 5.

\begin{figure}[H]
\centering
\includegraphics[width=0.3\columnwidth]{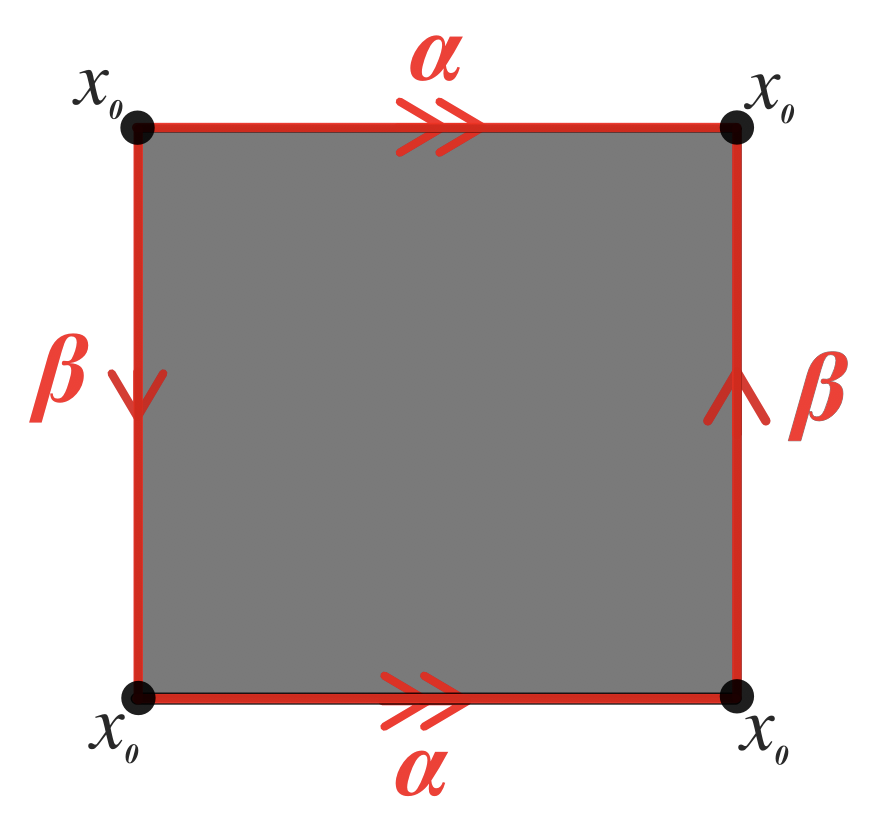}
\caption{Sliced Klein bottle with oriented paths $\alpha$ and $\beta$ } 
\label{Rotulo12}
\end{figure}

The figure 5 is a usual representation of Klein bottle where $\alpha$ and $\beta$ are loops with base point in $x_0$, therefore  $\alpha,\beta \in \pi_{1}(\mathbb{K}^{2},x_{0}).$ We need to prove the following theorem

\begin{theorem}
$\pi_{1}(\mathbb{K}^{2},x_{0})$ is a  free group generated by loops  $\alpha$ and $\beta$ such that $\beta\alpha\beta\alpha^{-1}=\rho_{x_0}$, that is, $$\pi_{1}(\mathbb{K}^{2},x_{0})= \big<\alpha,\beta | \beta\alpha\beta\alpha^{-1}\big>.$$
\end{theorem}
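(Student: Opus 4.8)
The plan is to apply the Seifert--Van Kampen Theorem to a suitable open cover of $\mathbb{K}^2$, using the deformation-retract results established above to identify the fundamental groups of the pieces. First I would take $U \subseteq \mathbb{K}^2$ to be an open disk around the centre $p$ of the square representation (chosen small enough not to meet the edges $\alpha,\beta$) and $V = \mathbb{K}^2 \setminus \{p\}$ the complement of that centre point. Both are open and path-connected, $U \cup V = \mathbb{K}^2$, and $U \cap V$ is an open punctured disk, hence path-connected; I pick the base point $x_0$ in $U\cap V$, lying on a loop $c$ that encircles $p$. Since $U$ is a disk it is contractible, so $\pi_1(U,x_0)$ is trivial. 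The punctured disk $U\cap V$ deformation retracts onto the circle $c$, so by the deformation-retract theorem $\pi_1(U\cap V, x_0) \simeq \pi_1(S^1) = \langle c \rangle \cong \mathbb{Z}$. Finally, $V$ deformation retracts onto the boundary of the square, which is the wedge of the two loops $\alpha$ and $\beta$; hence $\pi_1(V,x_0)$ is the free group $\langle \alpha, \beta\rangle$ on the two generators (itself obtainable by a further application of Seifert--Van Kampen, exactly as for a bouquet of circles, together with the isomorphism-under-homeomorphism theorem proved above).

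Next I would compute the two inclusion-induced homomorphisms on $\pi_1(U\cap V,x_0) = \langle c\rangle$. The inclusion into $U$ kills $c$, since $U$ is simply connected: $i_{1*}([c]) = [\rho_{x_0}]$. For the inclusion into $V$, I would push the loop $c$ along the deformation retraction of $V$ onto $\alpha\vee\beta$ and read off the resulting word in $\alpha,\beta$: tracing the boundary of the square with the Klein-bottle edge identifications, $c$ is carried to the boundary loop, so $i_{2*}([c]) = [\tau(\tau(\tau(\beta,\alpha),\beta),\sigma(\alpha))] = \beta\alpha\beta\alpha^{-1}$ (the retraction homotopy supplying the elementary rewrites, and the identity read up to $rw$-equality).

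Then I would feed this data into the pushout square of the Seifert--Van Kampen Theorem. Since $\pi_1(U,x_0)$ is trivial, the amalgamated free product $\pi_1(U)\ast_{\pi_1(U\cap V)}\pi_1(V)$ collapses to the quotient of $\pi_1(V) = \langle\alpha,\beta\rangle$ by the normal closure of $i_{2*}([c]) = \beta\alpha\beta\alpha^{-1}$, and the theorem identifies this quotient with $\pi_1(\mathbb{K}^2, x_0)$ via the isomorphism $k$. This yields exactly $\pi_1(\mathbb{K}^2,x_0) = \big<\alpha,\beta \mid \beta\alpha\beta\alpha^{-1}\big>$, as claimed.

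I expect the main obstacle to be the second step: verifying rigorously, and in the language of computational paths, that the puncture-encircling loop $c$ is sent by $i_{2*}$ precisely to the word $\beta\alpha\beta\alpha^{-1}$ and not to some cyclic permutation or inverse of it. This requires careful bookkeeping of the orientations of the edges in the square identification of $\mathbb{K}^2$ and of the direction in which $c$ is traversed, and then exhibiting the explicit composition of $\tau$'s realizing the path from $c$ to $\tau(\tau(\tau(\beta,\alpha),\beta),\sigma(\alpha))$; the homomorphism checks and the pushout argument, by contrast, are routine given the corollaries and theorems already proved.
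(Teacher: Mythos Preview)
Your proposal is correct and follows essentially the same approach as the paper: the same open cover (a disk $W$ around an interior puncture and $V$ the punctured Klein bottle), the same identifications $\pi_1(W)$ trivial, $\pi_1(V\cap W)\cong\langle\xi\rangle$, and $\pi_1(V)\cong\langle\alpha,\beta\rangle$ via deformation retraction onto the boundary wedge, and the same Van Kampen conclusion after reading the image of the generator as $\beta\alpha\beta\alpha^{-1}$. The only minor difference is basepoint bookkeeping: the paper keeps $x_0$ at the square's corner (where $\alpha,\beta$ are genuinely based), introduces an auxiliary basepoint $x_1\in V\cap W$, and then explicitly invokes the change-of-basepoint isomorphism $\kappa_\mu$ of Theorem~1 to pass from $\pi_1(V,x_1)$ and $\pi_1(K,x_1)$ to the corresponding groups at $x_0$, whereas you place your single basepoint inside the intersection and leave that transport implicit.
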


For the proof of this theorem we will need the Van Kampen Theorem, so put $K=V\cup W$, where $V$ and $W$ satisfy the hypotheses of the theorem. Let $i^{V}: V\cap W \rightarrow V$ and $i^{W}: V\cap W \rightarrow W$ be the inclusion maps. So, give $x_1 \in V\cap W$, the homomorphism induced by the inclusion maps are:

$$i_{*}^{V}: \pi_{1}(V \cap W,x_1)\longrightarrow \pi_{1}(V,x_1)$$ and

$$i_{*}^{W}: \pi_{1}(V \cap W,x_1)\longrightarrow \pi_{1}(W,x_1).$$


Let's consider the subsets $V$, $W$ and $V\cap W$ as follows:

\begin{figure}[H]
    \centering
    \subfloat[$V\subset K$]{{\includegraphics[width=3.6cm]{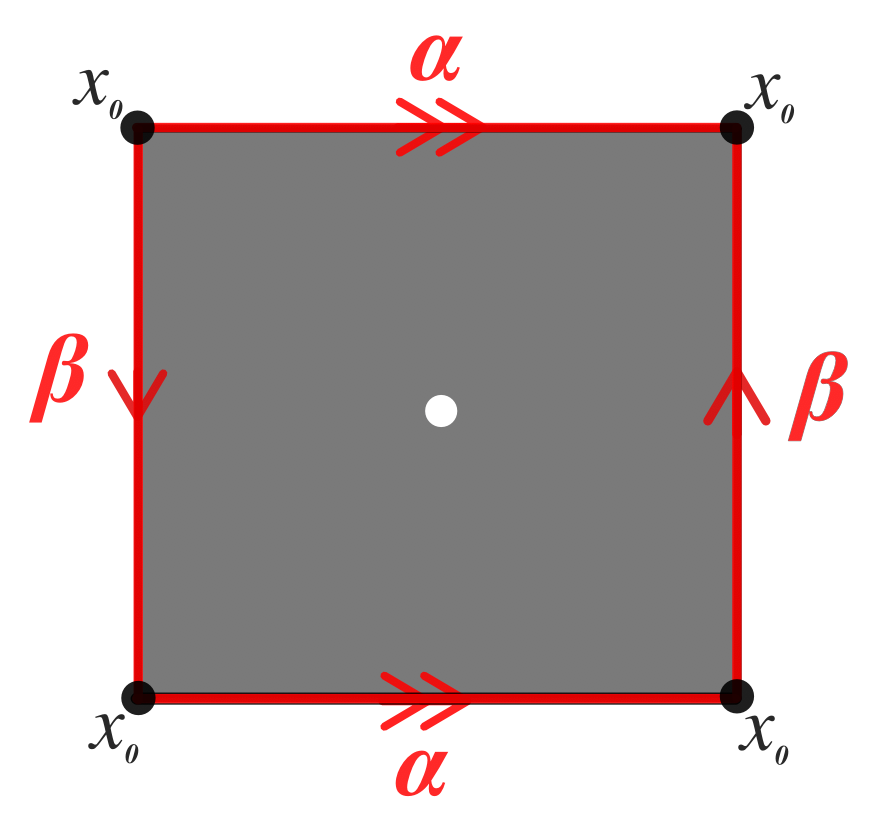} }}%
    \qquad
    \subfloat[$W\subset K$]{{\includegraphics[width=2.8cm]{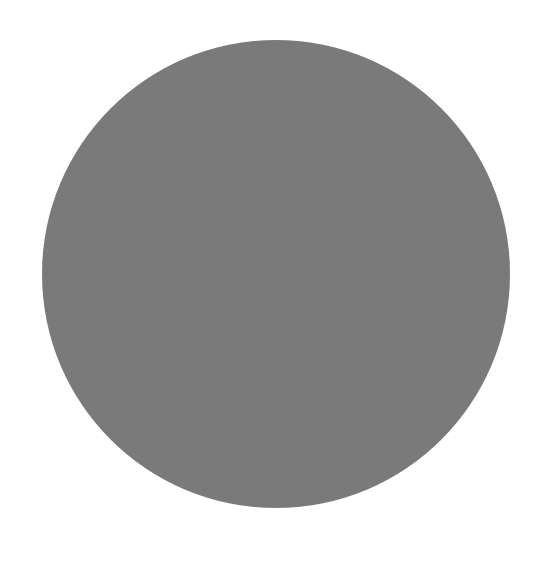} }}%
    \qquad
    \subfloat[$V\cap W\subset K$]{{\includegraphics[width=2.8cm]{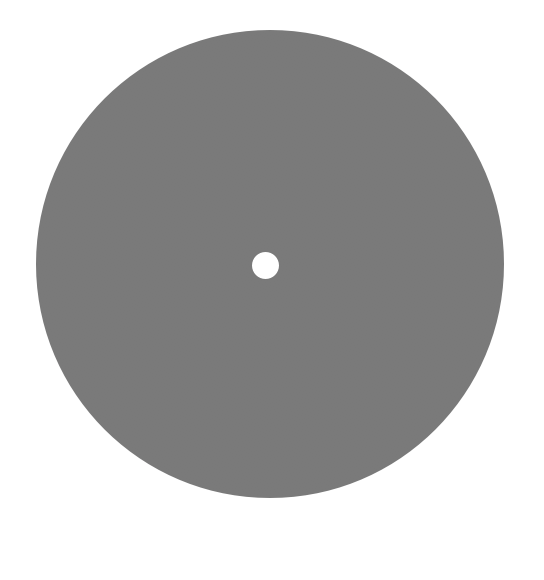} }}%
    \caption{The subsets are represented in darker color.}%
    \label{Rotulo13}
\end{figure}

Notice that $V$ is equal to $K$ minus one point, $W$ is an open disk that covers the puncture and $V\cap W$ is an open disk ( punctured).

Note that only the subset $V$ contains loops $\alpha$ and $\beta$  with base point $x_0$. Now we are interested in calculating the fundamental group of these subsets. We need to work with the same base point for the calculation of the fundamental group of each of the subsets, so we can ensure the homomorphisms induced in the fundamental groups. So consider a point $x_1\in V \cap W$.

Since  $x_1\in W$, we have that all \textit{loop$_{x_1}$} in $W$  is homotopic to constant path  because it continuously deforms to the constant path $\rho$. This way, we can conclude that $\pi_{1}(W,x_1)=\rho_{x_1}$. Now, we calculate the fundamental group of the other subsets.

\begin{itemize}
    
    \item \textsc{\textbf{(i) $\pi_{1}(V \cap W,x_1)$.}}

We are interested in getting the fundamental group of $V \cap W$ at the point $x_1$.
To do this, we need to choose \textit{loop$_{x_1}$} that cannot deform continuously to the point $x_1$ and therefore these loops are those that contain the space hole in its interior. So let $\xi$ be a loop in $V \cap W$ with base point in $x_1$, for simplicity denoted by $\xi_{x_1}$, as in the figure 5. 
 
 \begin{figure}[H]
\centering
\includegraphics[width=0.5\columnwidth]{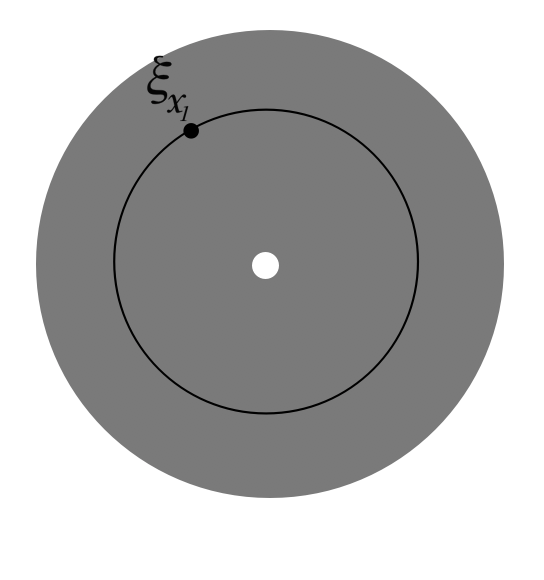}
\caption{ $V \cap W$ with loop $\xi_{x_1}$.} 
\label{Rotulo14}
\end{figure}

 This way we have that  $\xi_{x_1} \in \pi_{1}(V \cap W,x_1)$, we can define by $L$ the subspace of $V \cap W$ of the all loops generated by  $\xi_{x_1}$, that is, $L=\big< \xi_{x_1}\big>$.
 
 Define the map $r:V\cap W \rightarrow L$ such that $r(x)\in L$. $r$ is a map that carries a point $x\in V\cap W$ to the unique intersection point between $L$ 
and the line determined by the hole in the center space with the point $x$, as it is shown in the next figure.
 
 \begin{figure}[H]
\centering
\includegraphics[width=0.5\columnwidth]{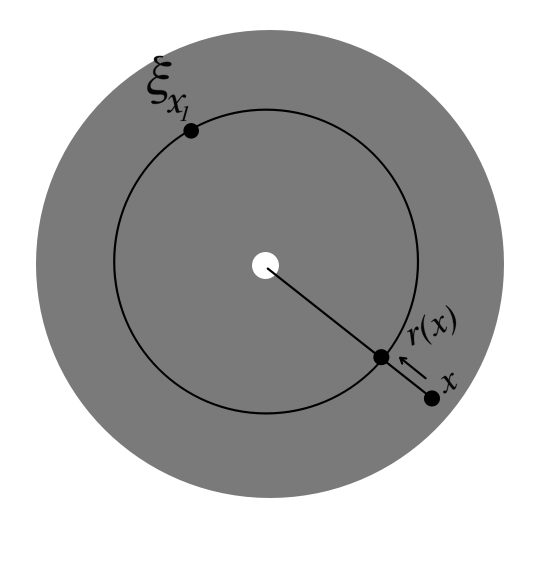}
\caption{ $r$ carry $x$ to $L$. } 
\label{Rotulo15}
\end{figure}

Therefore, we have that $L$ is a deformation retract of $V\cap W$ because there is a map $H: V\cap W \times I \longrightarrow V\cap W$ defined by:
 $$H(x,t) = (1-t)x+tr(x)$$
 
 that satisfies,
 
$$H(x,t) = \begin{cases}

H(x,0)=x &  \\
H(x,1)= r(x) \\
H(x,t)=x, \forall a \in A.  \\
\end{cases}$$ 

that is, $H$ is a homotopy between $r(x)$ and the identity application in such a way that each point of $L$ remains fixed during homotopy. Since $L$ is a deformation retract of $V\cap W$ we have by inclusion map $i^{V\cap W}:L \rightarrow V\cap W$ induces an isomorphism between $\pi_{1}(L,x_1)$ and $\pi_{1}(V\cap W,x_1)$. Since we have that $ \pi_{1}(L,x_1)$ is isomorphic to $\pi_{1}(S^1,x_1)$, we have:
$$\pi_{1}(V\cap W,x_1) \simeq \pi_{1}(L,x_1) \simeq \pi_{1}(S^1,x_1).$$ 

Finally, we can conclude that $\pi_{1}(V\cap W,x_1)= \big<\xi_{x_1}\big>$, that is, the free group generated by $\big<\xi_{x_1}\big>$.

 \item \textsc{\textbf{(ii) $\pi(V,x_0)$.}}

Let $\Tilde{B}$ be the border of $K$, that is, $\Tilde{B}\subset I^2$ is the subspace whose elements are the four loops with base point in $x_0$. We can then define the projection map given by:

\begin{eqnarray*}
Proj: &\Tilde{B}& \longrightarrow B\subset V\\
&\alpha_{x_0}& \longrightarrow Proj(\alpha_{x_0})=\alpha_{x_0}\\
&\beta_{x_0}& \longrightarrow Proj(\beta_{x_0})=\beta_{x_0},
\end{eqnarray*}

where $B=proj(\Tilde{B})\subset V$ is the image of the projection map, as we can see in the next figure:

 \begin{figure}[H]
\centering
\includegraphics[width=0.7\columnwidth]{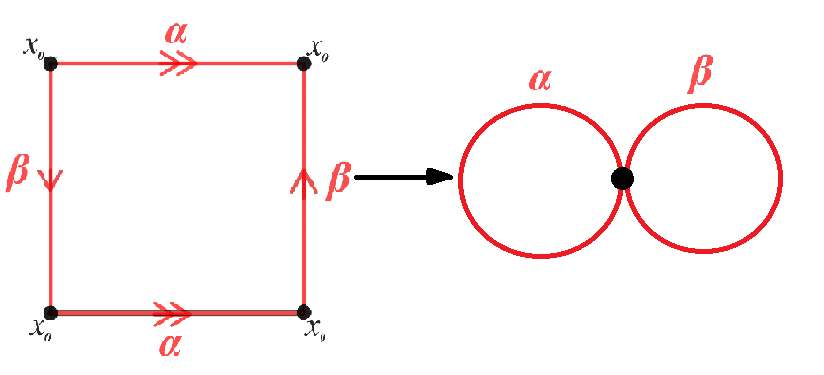}
\caption{ Projection map. } 
\label{Rotulo16}
\end{figure}
Since we can identify the four base points $x_{0}\in\Tilde{B}$ as a single point in the quotient space $B$, we have that $B$ is the subspace of $V$ formed by the collage of two circles (loops$_{x_0}$) by the point $x_0$. So the $\pi_{1}(E,x_0)$  is the free subgroup generated by $[\alpha_{x_0}]$ and $[\beta_{x_0}]$ both elements of $\pi_{1}(E,x_0)$. 
Taking the construction analogous to the case of the intersection made earlier, if we define a map

\begin{eqnarray*}
&H&: V \times I \longrightarrow V\\
&H&(x,t) = (1-t)x+tr(x),
\end{eqnarray*}

we have a homotopy between the map $V$ and the deformation retract $r:V \rightarrow E \subset V$, that is, $H$ induces a homotopy betwwen $V$ on the quaiciente space $E$. So  $\pi_{1}(V,x_0) \simeq \pi_{1}(E,x_0)$ and we can conclude that   $$\pi_{1}(V,x_0)=\big<\alpha_{x_0},\beta_{x_0}\big>,$$ that is, is the free group generated $[\alpha_{x_0}]$ and $[\beta_{x_0}]$.

\end{itemize}

Note that we calculate the $\pi_{1}(V,x_0)$ with respect to the point $x_0$, but to obtain the induced homomorphism $i_{*}^{V}:\pi_{1}(V\cap W,x_1) \longrightarrow \pi_{1}(V,x_1)$ we must obtain $ \pi_{1}(V,x_1)$.

Our problem will be solved by putting $X=V$ in theorem 1, so we can make the following statement:

Since $V$ is a space connected by paths and $x_{0}\underset{\mu}{=} x_{1}$, we can claim that the map  $$\kappa_{\mu}: \pi_1(V,x_1) \longrightarrow \pi_1(V,x_0)$$ given by: $$\kappa_{\mu}([\alpha_{x_1)}])=[\mu]*[\alpha_{x_1}]*[\sigma(\mu])= \tau \big(\tau \big(\mu,\alpha_{x_1}\big),\sigma(\mu) \big),$$

is the desired isomorphism and therefore $\pi_1(V,x_1) \simeq \pi_1(V,x_0)$. In this case, we conclude that: $\kappa_{\mu}([\alpha_{x_1)}])=\beta\alpha\beta\alpha^{-1}$, where $\beta\alpha\beta\alpha^{-1}\in \pi_1(V,x_0)$.

Geometrically we have:

 \begin{figure}[H]
\centering
\includegraphics[width=0.5\columnwidth]{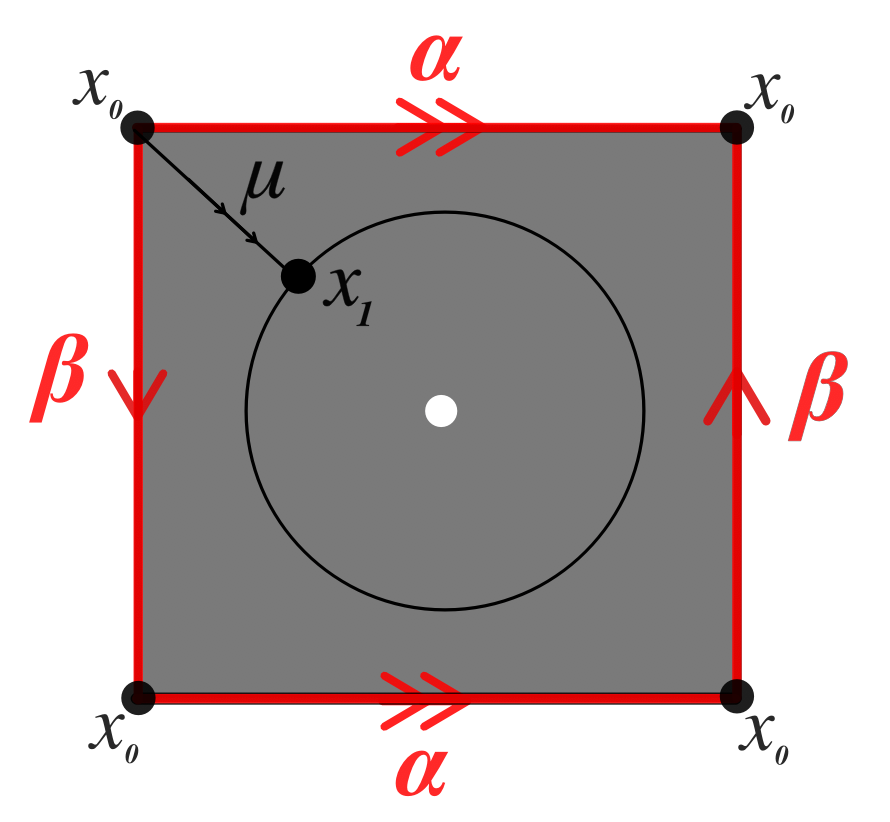}
\caption{Every loop in $x_1$  continuously deforms in  $\beta\alpha\beta\alpha^{-1}$.} 
\label{Rotulo17}
\end{figure}
Therefore  $$\pi_1(V,x_0)=\big<\alpha,\beta \mid \beta\alpha\beta\alpha^{-1}\big>.$$

Once $V \subset K$ so $i^{K}:V \longrightarrow K$ is the inclusion map, we have that the map $i^{K}_{*}:\pi_{1}(V,x_{1}) \longrightarrow \pi_1(K, x_1)$ is the homomorphism induced by inclusion map and your kernel is given by normal subgroup of $\pi_1(V,x_1)$ which is generated by image of $\pi_1(V\cap W,x_1)$. In addition, as $\pi_{1}(V,x_{1})=\rho_{x_1}$ the Seirfet-Van Kampen Theorem ensures that  $\pi_{1}(K,x_{1})$ depends on $\pi_1(V\cap W,x_1)$,  $\pi_1(V,x_1)$ and the morphisms between them. This gives us:

$$
\begin{tikzcd}
\pi_1(V\cap W,x_1) \arrow[r, "i_{*}^{V}" blue] 
    & \pi_1(V,x_1) \arrow[r, "i_{*}^{K}" blue]
    & \pi_1(K,x_1).\\
\end{tikzcd}
$$

On the other hand, as $V$ and $K$ are connected by paths follows from the theorem 3 that:

 $$\kappa_{\mu}: \pi_1(V,x_1) \longrightarrow \pi_1(V,x_0)$$
 
 and
 
  $$\kappa_{\mu}:\pi_1(K,x_1) \longrightarrow \pi_1(K,x_0)$$

are isomorphisms. This way we get the following diagram:

$$
\begin{tikzcd}
\pi_1(V\cap W,x_1) \arrow[r, "i_{*}^{V}" blue] 
    & \pi_1(V,x_1) \arrow[d,  "\kappa_{\mu}" red] \arrow[r, "i_{*}^{K}" blue]
    & \pi_1(K,x_1) \arrow[d, "\kappa_{\mu}" red]\\
    & \pi_1(V,x_0) \arrow[r,  "i_{*}^{K}" blue]
    &\pi_1(K,x_0).
\end{tikzcd}\\\\
$$

Since $i^{K}_{*}:\pi_{1}(V,x_{0}) \longrightarrow \pi_1(K, x_0)$ is a homomorphism induced by inclusion map $i^{K}:V \longrightarrow K$ and by diagram commutativity we have that $i^{K}_{*}:\pi_{1}(V,x_{0}) \longrightarrow \pi_1(K, x_0)$ is surjective with kernel is the normal subgroup generated by the image of $i^{V}_{*}*\kappa_{\mu}=\tau(i^{K}_{*},\kappa_{\mu})$.

Therefore, we can conclude that 

$$\pi_{1}(K, x_0) \simeq \big<\alpha,\beta \mid \beta\alpha\beta\alpha^{-1}\big>.$$


\subsection{Fundamental Group of the Torus  $\pi_1(\mathbb{T}^{2},x_0)$}
The calculations made in the last two subsections give us an idea of how to evaluate the calculations of two interesting surfaces. The torus $\mathbb{T}^2$ is the first surface we can think of to calculate, this choice is very natural since the torus has a slice representation almost identical to that of the Klein bottle, which enables us to obtain a fundamental torus group by means of an approach very similar to that made in  the fundamental group of the Klein bottle. 

But wouldn't this be ``more of the same"? The answer is no! Unlike  $\mathbb{K}^2$, $\mathbb{T}^2$ is not an abstract surface. Furthermore, we can consider defining the torus surface as a type and thus use computational paths to get some relevant results from this, as follows.
  
We now give the formal definition of the torus in homotopy type theory:

\begin{definition}
 The torus $\mathbb{T}^{2}$ is a type generated by:

 \item[ (i)] A base point $x_0:\mathbb{T}^2$
 \item[(ii)] Two paths $\alpha$ and $\beta$ such that: $x_{0}\underset{\alpha}{=} x_{0}:\mathbb{T}^2$ \hspace{0.2cm} and \hspace{0.2cm}  $x_{0}\underset{\beta}{=} x_{0}:\mathbb{T}^2$.
 
 \item[(iii)] One path $co$ that establishes $\beta\alpha\underset{co}{=}\alpha\beta$, i.e., a term $co:Id(\beta\alpha,\alpha\beta)$.

\end{definition}

The first thing one should notice is that this definition does not use only the points of the type $\mathbb{T}^2$ , but also a computational path \textit{loop$_{x_0}$}  between those points and paths between paths. That is why it is called a higher inductive type (Univalent Foundations Program, 2013). Our approach differs from the classic one mainly in the fact that we do not need to simulate the path-space between those points, since computational paths exist in the syntax of our theory. 

Thus, if one starts with a path $x_{0}\underset{\alpha}{=} x_{0}:\mathbb{T}^2$ and/or  $x_{0}\underset{\beta}{=} x_{0}:\mathbb{T}^2$ , one can naturally obtain additional paths applying the path-axioms corresponding to $\rho$, $\tau$ and $\sigma$.  In the original formulation of identity types in type theory, the existence of those additional paths comes from establishing that the paths should be freely generated by the constructors (Univalent Foundations Program, 2013). In our approach, we do not have to appeal to this kind of argument, since all paths come naturally from direct applications of the axioms.

 In homotopy theory, the fundamental group is the one formed by all equivalence classes up to homotopy of paths (\textit{loop$_{x_0}$}) starting from a point $x_0$ and also ending at $x_0$. Since we use computational paths as the syntax counterpart in type theory of homotopic paths, we use it to propose the following definition:
 
\begin{definition}
		$\Pi_{1}(A, x_0, x_0 \underset{loop}{=} x_0)$ is a structure defined as follows:
		
		\begin{center}
			$\Pi_{1}(X, x_0) = \Big\{x_0\underset{r}{=} x_0: X\  \mid r\underset{rw}{=}[loop^n]_{rw}$, \text{for some} $n \in \mathbb{Z} \Big\},$ 
		\end{center}
		where $X$ is a type and $x_0 \underset{loop}{=} x_0$ is a base computational path that generates $x_0 \underset{r}{=} x_0$.
	\end{definition}
	
	For simplicity, we denote by $\Pi_{1}(X, x_0)$ every time we refer to structure $\Pi_{1}(X, x_0, x_0 \underset{loop}{=} x_0)$. 




Since the fundamental groups are obtained by studying the loops, we will be interested in working with \textit{loops} that are not homotopic to the base point $x_{0}$, like \textit{loops} $\alpha$ and $\beta$. These loops will be the generators of $\mathbb{T}^{2}$ as shown in \textbf{figure 9}.

\begin{figure}[H]
\centering
\includegraphics[width=0.6\columnwidth]{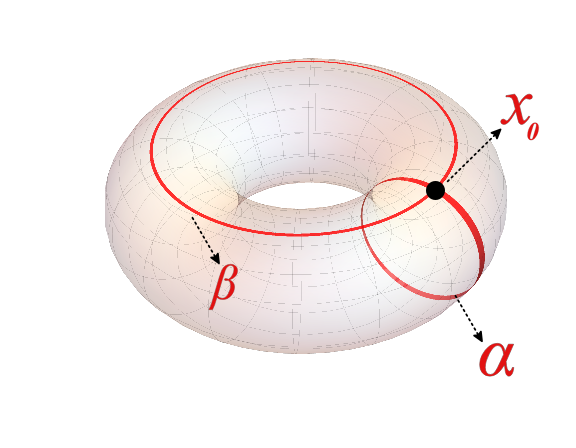}
\caption{Paths $\alpha$ and $\beta$ with base point $x_{0}$ in Torus} 
\label{Rotulo19}
\end{figure}

\begin{definition}[{vertical loop}]
  We define as vertical loop the path (loop) that passes through the inner part of $\mathbb{T}^{2}$ in the vertical direction. In \textbf{figure 9}, this loop is denoted by $\alpha$.
   
\end{definition}

\begin{definition}[{horizontal loop}]
We define as horizontal loop the path (loop) that passes the inner part of $\mathbb{T}^{2}$ in the horizontal direction. In \textbf{figure 9}, this \textit{loop} is denoted by $\beta$.

\end{definition}

Note that these two \textit{loops} are not of the type $\rho$ (homotopic to constant $x_{0}$). Furthermore, we will prove that they generate $\mathbb{T}^{2}$. In what follows, we define and denote by: $\alpha^{n}=$\textit{loop$^{n}_{v}$} the path composed by $n$ vertical loops and by $\beta^{m}=$\textit{loop$^{m}_{h}$} the path composed by $m$ horizontal loops.

Given a point $x_{0}$, we can slice the Torus and represent it as a rectangle whose sides are the loops $\alpha$ and $\beta$, as it is shown in \textbf{figure 10}.

\begin{figure}[H]
\centering
\includegraphics[width=0.4\columnwidth]{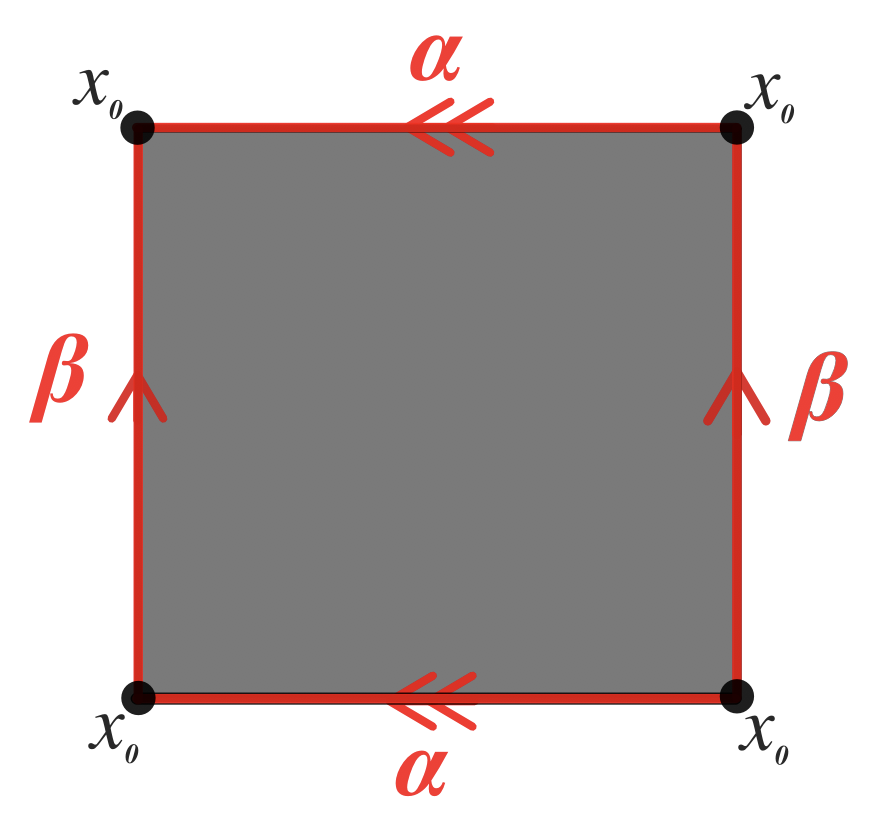}
\caption{Sliced Torus with oriented paths $\alpha$ and $\beta$. } 
\label{Rotulo20}
\end{figure}

Consider the following path in the figure: $\tau \left (\beta,\alpha,\sigma( \beta),\sigma( \alpha )\right)= \beta* \alpha * \beta^{-1} * \alpha^{-1}$:

\begin{proposition}
The aforementioned path is $rw$-equal to the reflexive path.
\end{proposition}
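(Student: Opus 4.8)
The plan is to take the term $\tau(\tau(\tau(\beta,\alpha),\sigma(\beta)),\sigma(\alpha))$ --- which is what $\beta*\alpha*\beta^{-1}*\alpha^{-1}$ abbreviates, reading $*$ as $\tau$ and $(-)^{-1}$ as $\sigma$ with the left-associated bracketing (any other bracketing reaches the same normal form, since $LND_{EQ}$-$TRS$ is terminating and confluent) --- and to $rw$-reduce it to $\rho_{x_0}$ by a short chain of contractions, the first of which uses the path $co$ supplied by the higher inductive definition of $\mathbb{T}^2$.

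First I would use the path $co$ between the paths $\beta\alpha$ and $\alpha\beta$, i.e.\ the term $co : Id(\tau(\beta,\alpha),\tau(\alpha,\beta))$, to replace the innermost subterm $\tau(\beta,\alpha)$ by $\tau(\alpha,\beta)$; this is legitimate because, by Proposition~\ref{proposition3.7}, $rw$-equality is reflexive, symmetric and transitive, and it is a congruence for the constructors $\tau$ and $\sigma$, so $co$ may be used as an extra rewrite on a subpath. The term becomes $\tau(\tau(\tau(\alpha,\beta),\sigma(\beta)),\sigma(\alpha))$. Next I would reassociate the subterm $\tau(\tau(\alpha,\beta),\sigma(\beta))$ with the rule $\tau(\tau(t,r),s)\triangleright_{tt}\tau(t,\tau(r,s))$, taking $t=\alpha$, $r=\beta$, $s=\sigma(\beta)$, to get $\tau(\tau(\alpha,\tau(\beta,\sigma(\beta))),\sigma(\alpha))$. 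Then the subterm $\tau(\beta,\sigma(\beta))$ contracts to $\rho$ by the rule $\tau(t,\sigma(t))\triangleright_{tr}\rho$, giving $\tau(\tau(\alpha,\rho),\sigma(\alpha))$; the subterm $\tau(\alpha,\rho)$ contracts to $\alpha$ by $\tau(t,\rho)\triangleright_{trr}t$, giving $\tau(\alpha,\sigma(\alpha))$; and one last application of $tr$ produces $\rho_{x_0}$. Chaining these contractions shows that $\beta*\alpha*\beta^{-1}*\alpha^{-1}$ is $rw$-equal --- indeed $rw$-reducible, once $co$ is available --- to the reflexive path, as claimed.

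The one point that needs care is the very first step. Unlike the remaining contractions, which are literal instances of rules of $LND_{EQ}$-$TRS$ (so that everything after the use of $co$ is a routine, confluent computation), the replacement of $\beta\alpha$ by $\alpha\beta$ is not one of those rules but comes from the generating path $co$ of the type $\mathbb{T}^2$; the justification that it may nonetheless be performed inside a larger path rests precisely on the congruence and equivalence properties of $rw$-equality recalled above. Once that is granted, no further obstacle remains.
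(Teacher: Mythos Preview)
Your proof is correct and follows essentially the same approach as the paper: use the generating $2$-path $co$ to commute the generators so that each of $\alpha,\beta$ stands next to its inverse, and then apply the cancellation rules of $LND_{EQ}$-$TRS$ to reach $\rho_{x_0}$. The only cosmetic difference is where $co$ is invoked: you apply it literally to the prefix $\tau(\beta,\alpha)$ and then use $tt$, $tr$, $trr$, $tr$, whereas the paper commutes the middle pair $\alpha*\beta^{-1}$ (a derived instance of $co$) and finishes with a single grouping $\tau(\tau(\sigma(\beta),\beta),\tau(\sigma(\alpha),\alpha))\to\tau(\rho,\rho)\to\rho$; your version is arguably tidier because it uses $co$ exactly as stated.
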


\begin{proof}
Indeed,

\begin{eqnarray*}
\beta*\alpha *\beta^{-1}*\alpha^{-1} &\underset{co}{=}& \beta*\beta^{-1} *\alpha * \alpha^{-1} \\
 &=& \tau \left(\tau(\sigma(\beta),\beta),\tau(\sigma(\alpha)\alpha,)\right)  \\
 &\underset{tst}{=}& \tau(\rho_{x_0},\rho_{x_0})\\
 &\underset{trr}{=}&\rho_{x_0}\\
\end{eqnarray*} 

 and thus,  $\tau \left (\beta,\alpha,\sigma( \beta),\sigma( \alpha )\right)= \beta* \alpha * \beta^{-1} * \alpha^{-1} = \rho_{x_0}$ \\
\end{proof}


 \begin{lemma}
 
All paths in $\mathbb{T}^{2}$ are generated by the application of $\rho$,$\tau$ and $\sigma$ in base path  $[loop^1]_{rw}$. And these paths are $rw$-equal to a path  $[loop^n]_{rw}$, for $n \in \mathbb Z$.

\end{lemma}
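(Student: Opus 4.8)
The plan is to prove the two assertions together by induction on the construction of a computational path $x_0 \underset{r}{=} x_0 : \mathbb{T}^2$; since $x_0$ is the only point constructor of $\mathbb{T}^2$, every path of $\mathbb{T}^2$ is such a loop. For the generation claim I would unfold the presentation of $\mathbb{T}^2$ as a higher inductive type: its generators are $x_0$, the loops $\alpha,\beta$, and the $2$-path $co:Id(\beta\alpha,\alpha\beta)$, so any closed term $r$ with $x_0 \underset{r}{=} x_0$ is obtained from $\alpha$, $\beta$ and the reflexivity constant $\rho_{x_0}$ by finitely many applications of the equality inference rules recalled earlier in the paper. Among those rules, only $\rho$, $\sigma$ and $\tau$ can send loops based at $x_0$ to loops based at $x_0$ — the rules $\mu$, $\nu$, $\xi$, $\beta$, $\eta$ require abstraction or application structure absent from this fragment, while $co$ operates one level up, contributing only to $rw$-equalities between already-built loops — so a direct structural induction on $r$ shows that $r$ lies in the closure of $\{\alpha,\beta\}$ under $\rho$, $\sigma$, $\tau$.

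For the normal-form claim I would induct on the total number of occurrences of $\sigma$ and $\tau$ in $r$. The base cases $r\equiv\rho_{x_0}$, $r\equiv\alpha$, $r\equiv\beta$ are immediate (they are the words $\alpha^0*\beta^0$, $\alpha^1*\beta^0$, $\alpha^0*\beta^1$). For the inductive step, if $r\equiv\sigma(s)$ or $r\equiv\tau(s,t)$ then by the induction hypothesis $s$ (and $t$) is $rw$-equal to a canonical word $\alpha^i*\beta^j$, and it remains to renormalise: the associativity rules $tt$ and $\sigma(tt)$ reassociate the concatenation; the rule distributing $\sigma$ over $\tau$ (turning $\sigma(\tau(s,t))$ into $\tau(\sigma(t),\sigma(s))$), together with the $ss$ rule, pushes every $\sigma$ down onto individual generators; the cancellation rules $tsr$, $tlr$, $trr$ delete adjacent pairs $\sigma(\alpha)*\alpha$, $\alpha*\sigma(\alpha)$ (and likewise for $\beta$) and absorb $\rho_{x_0}$'s; and the commutativity relation $\beta*\alpha*\beta^{-1}*\alpha^{-1}\underset{rw}{=}\rho_{x_0}$, proved in the Proposition immediately preceding this lemma by one application of $co$ followed by $tst$ and $trr$, lets me transpose an out-of-order adjacent pair $\beta*\alpha$ to $\alpha*\beta$. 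Iterating these moves sorts every $\alpha$-letter to the left of every $\beta$-letter and cancels all inverse pairs, so $r$ is $rw$-equal to a word $\alpha^n*\beta^m$ with $n,m\in\mathbb{Z}$, negative exponents being realised by iterating $\sigma$ — the canonical representative written $[loop^n]_{rw}$ in the definition of $\Pi_1$ above — which finishes the induction.

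The step I expect to need the most care is the renormalisation inside the inductive step, because the transposition move $\beta*\alpha \rhd \alpha*\beta$ coming from $co$ is not length-decreasing and interacts with the cancellation rules, so termination of the bare $LND_{EQ}$-TRS does not apply verbatim. I would organise it in two phases: first push every $\sigma$ down onto the generators using the $\sigma$-over-$\tau$ and $ss$ rules (this phase terminates on its own, being part of the terminating and confluent $LND_{EQ}$-TRS), obtaining a plain word in the letters $\alpha,\beta,\sigma(\alpha),\sigma(\beta)$; then sort and cancel, using as a well-founded measure the lexicographic pair consisting of the word length and the number of inversions, which strictly decreases under each cancellation (length drops) and under each transposition (length fixed, inversions drop). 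Alternatively, since the lemma asserts only $rw$-equality rather than one-directional $rw$-reduction, one may dispense with termination concerns entirely and present each step as an $rw$-equality, invoking the Proposition above together with the cancellation $rw$-rules and the transitivity, symmetry and reflexivity of $rw$-equality (Proposition \ref{proposition3.7}). The remaining work is elementary bookkeeping of the exponents $n,m$ through the two inductive cases.
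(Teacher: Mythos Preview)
Your argument is correct and reaches the same conclusion, but the decomposition differs from the paper's. The paper does not induct on the term structure of $r$; instead it treats a loop as a word and inducts on its length, appending one generator at a time: after the base cases $\rho_{x_0},\alpha,\beta,\sigma(\alpha),\sigma(\beta)$, its inductive step shows only that left--composing a normal form $\beta^{n}\alpha^{m}$ with a single letter $\rho_{x_0},\alpha,\beta,\alpha^{-1},\beta^{-1}$ again yields some $\beta^{n'}\alpha^{m'}$, using $co$ to slide $\alpha^{\pm 1}$ across the $\beta$--block and the rules $trr$, $tt$, $tsr$ for cancellation. Your structural induction on $r$, handling the full cases $\sigma(s)$ and $\tau(s,t)$ with a two--phase normalisation (push $\sigma$ onto generators via $stss$ and $ss$, then sort--and--cancel under the lexicographic measure $(\text{length},\text{inversions})$), is more general: it covers arbitrarily nested $\tau$'s directly, whereas the paper's step implicitly assumes every loop is already a left--to--right concatenation of single generators. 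The trade--off is that the paper's version is shorter and reads off the exponent arithmetic immediately, while yours makes explicit why the renormalisation terminates and why the generation claim holds, points the paper leaves tacit.
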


\begin{proof}
Consider the following cases:

\item[(i)] Base case: $\beta^{0}\alpha^{0}=\rho_{x_0}$.
\item[(ii)] $\rho_{x_0}\circ \alpha=\tau(\alpha,\rho_{x_0}) \underset{trr}{=}\alpha =\beta^{0}\alpha^{1}.$
\item[(iii)] $\rho_{x_0}\circ \beta=\tau(\beta,\rho_{x_0}) \underset{trr}{=}\beta =\beta^{1}\alpha^{0}.$
\item[(iv)] $\rho_{x_0}\circ \alpha^{-1}=\tau(\sigma(\alpha),\rho_{x_0}) \underset{trr}{=}\sigma(\alpha) =\beta^{0}\alpha^{-1}.$
\item[(v)] $\rho_{x_0}\circ \beta^{-1}=\tau(\sigma(\beta),\rho_{x_0}) \underset{trr}{=}\sigma(\beta) =\beta^{-1}\alpha^{0}.$

Assuming, by the induction hypothesis, that every path is \textit{rw-equal}  $\beta^{n} \alpha^{m}$, we have:

\item[(1)] $\rho_{x_0}\circ \beta^{n}\alpha^{m}=\tau(\beta^{n}\alpha^{m},\rho_{x_0}) \underset{trr}{=}\beta^{n}\alpha^{m}.$
\item[(2)] $\alpha\circ \beta^{n}\alpha^{m}\underset{co}{=} \alpha\circ \alpha^{m}\beta^{n}=\alpha^{m+1}\beta^{n}\underset{co}{=}\beta^{n}\alpha^{m+1}.$
\item[(3)]  $\beta\circ \beta^{n}\alpha^{m}=\beta^{n+1}\alpha^{m} =\beta^{n}\alpha^{m+1}.$

\item[(4)] $\beta^{-1}\circ \beta^{n}\alpha^{m} =(\beta^{-1}\circ (\beta\circ \beta^{n-1}))\alpha^{m}\underset{tt}{=}((\beta^{-1}\circ \beta)\circ \beta^{n-1})\alpha^{m}\underset{tsr}{=}(\rho_{x_0}\circ\beta^{n-1})\alpha^{m}=\beta^{n-1}\alpha^{m}.$

\item[(5)] $\alpha^{-1}\circ \beta^{n}\alpha^{m}\underset{co}{=} \alpha^{-1}\circ \alpha^{m}\beta^{n}=(\alpha^{-1}\circ (\alpha\circ \alpha^{m-1}))\beta^{n}\underset{tt}{=} ((\alpha^{-1}\circ \alpha)\circ \alpha^{m-1})\beta^{n}\underset{tsr}{=}(\rho_{x_0}\circ\alpha^{m-1})\beta^{n}=\alpha^{m-1}\beta^{n} \underset{co}{=}\beta^{n}\alpha^{m-1}.$

 This lemma shows that every path of the fundamental group can be represented by a path of the form \textit{loop$_{x_0}$}$=\beta^{n} \alpha^{m}$, with $m,n \in \mathbb{Z}$.\\
\end{proof}
Now, we will prove that this structure is in fact a group.
\begin{proposition}
$ \left( \Pi_{1}(\mathbb{T}^{2},x_{0}),\circ \right)$ is a group.
\end{proposition}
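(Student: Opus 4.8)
The plan is to verify the four group axioms for $\left(\Pi_{1}(\mathbb{T}^{2},x_{0}),\circ\right)$, where $\circ$ is composition of computational paths (the rule $\tau$) taken up to $rw$-equality, i.e.\ $[s]_{rw}\circ[t]_{rw} := [\tau(s,t)]_{rw}$. First I would record that $\circ$ is well defined on $rw$-equivalence classes: since $\tau$ is a congruence for $rw$-equality (if $s =_{rw} s'$ and $t =_{rw} t'$ then $\tau(s,t) =_{rw} \tau(s',t')$ — the same fact already used implicitly when $\varphi_{\mu}$ was declared well defined), the product does not depend on the chosen representatives, so it suffices to reason with the normal forms $\beta^{n}\alpha^{m}$ supplied by the previous lemma.

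For closure I would invoke that lemma: every path of $\Pi_{1}(\mathbb{T}^{2},x_{0})$ is $rw$-equal to some $\beta^{n}\alpha^{m}$ with $m,n\in\mathbb{Z}$. Given two such elements, $(\beta^{n}\alpha^{m})\circ(\beta^{n'}\alpha^{m'})$, repeated use of the torus path $co$ (which yields $\beta\alpha =_{co} \alpha\beta$) together with $tt$ lets me commute the $\alpha$'s past the $\beta$'s and regroup, producing a path $rw$-equal to $\beta^{n+n'}\alpha^{m+m'}$; hence the product is again in $\Pi_{1}(\mathbb{T}^{2},x_{0})$. (This computation also yields commutativity, but for the group structure it is only needed for closure.)

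Associativity is immediate from the rewrite rule $tt$: $\tau(\tau(t,r),s)\rhd_{tt}\tau(t,\tau(r,s))$ shows $([a]\circ[b])\circ[c] = [a]\circ([b]\circ[c])$ as $rw$-classes. For the identity I would take $[\rho_{x_0}]_{rw}$ (note $\rho_{x_0} = \beta^{0}\alpha^{0}\in\Pi_{1}$), using $\tau(r,\rho_{x_0}) =_{trr} r$ and $\tau(\rho_{x_0},r) =_{tlr} r$. For inverses, given $x_0 =_{r} x_0$ I would take $[\sigma(r)]_{rw}$: one checks that $\sigma(\beta^{n}\alpha^{m})$ is $rw$-equal to a normal form (hence lies in $\Pi_{1}$), and $\tau(\sigma(r),r) =_{tsr} \rho_{x_0}$ together with $\tau(r,\sigma(r)) =_{tst} \rho_{x_0}$ give the two inverse laws.

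The main obstacle is the closure step. Unlike the other axioms, which follow from generic $LND_{EQ}$-$TRS$ rules valid in any type, closure genuinely uses the torus-specific path $co$, and one must be careful that the sequence of $co$- and $tt$-rewrites bringing $\beta^{n}\alpha^{m}\beta^{n'}\alpha^{m'}$ to $\beta^{n+n'}\alpha^{m+m'}$ is carried out consistently for all signs of the exponents — essentially the same induction already performed in the preceding lemma. Everything else reduces to routine manipulation with the rules $tt$, $trr$, $tlr$, $tsr$ and $tst$.
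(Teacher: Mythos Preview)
Your proof is correct and follows essentially the same approach as the paper: both verify closure, identity, inverses, and associativity by working with the normal forms $\beta^{n}\alpha^{m}$ supplied by the preceding lemma and invoking the relevant $LND_{EQ}$-$TRS$ rules together with the torus path $co$. The only slip is a rule name---the reduction $\tau(r,\sigma(r)) =_{rw} \rho_{x_0}$ is $tr$, not $tst$---and beyond that the two arguments differ only in presentation (you cite $tt$ directly for associativity where the paper expands both bracketings to the same word, and you use the generic inverse $\sigma(r)$ where the paper writes the explicit $\sigma(\beta^{n})\sigma(\alpha^{m})$).
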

\smallskip
\begin{proof}

\item [(+): Sum]  $$$$

\begin{prooftree}
\AxiomC{$x_{0} \underset{\beta^{u}\alpha^{v}}{=}x_{0}$}
\AxiomC{$x_{0} \underset{\beta^{r}\alpha^{s}}{=} x_{0}$}
\BinaryInfC{ $x_{0} \underset{\tau \left(\beta^{u}\alpha^{v},\beta^{r}\alpha^{s}\right)}{=}x_{0}$}
\end{prooftree}

But, 

\begin{eqnarray*}
\tau(\beta^{u}\alpha^{v},\beta^{u}\alpha^{v}) &=& (\beta^{r}\alpha^{s}) \circ (\beta^{u}\alpha^{v}) \\
 &=& \beta^{r}\alpha^{s} \beta^{u}\alpha^{v}\\
 &\underset{co}{=}& \beta^{r}\beta^{u}\alpha^{s} \alpha^{v}\\
 &=& \beta^{n}\alpha^{m} \in \Pi_{1}\left(T,x_{0}\right).
\end{eqnarray*}

\item[($\sigma$): Inverse]$$$$ 

\begin{prooftree}
\AxiomC{ $x_{0} \underset{\beta^{n}\alpha^{m}}{=}x_{0}$}
\AxiomC{$x_{0} \underset{\sigma{(\beta^{n})}\sigma{(\alpha^{m})}}{=} x_{0}$}
\BinaryInfC{ $x_{0} \underset{\tau \left(\beta^{n}\alpha^{m},\sigma{(\beta^{n})}\sigma{(\alpha^{m})}\right)}{=}x_{0}$}
\end{prooftree}

But, 

\begin{eqnarray*}
\tau(\beta^{n}\alpha^{m},\sigma{(\beta^{n})}\sigma{(\alpha^{m})}) &=& (\sigma{(\beta^{n})}\sigma{(\alpha^{m})}) \circ (\beta^{n}\alpha^{m}) \\
 &=& \sigma{(\beta^{n})}\sigma{(\alpha^{m})} \beta^{n}\alpha^{m}\\
 &\underset{co}{=}&  \sigma{(\beta^{n})}\beta^{n}\sigma{(\alpha^{m})}\alpha^{m}\\
 &\underset{tsr}{=}&\rho_{\beta}\rho_{\alpha}\underset{trr}{=}\rho_{x_{0}}.
\end{eqnarray*}

On the other hand, we have:

\begin{prooftree}
\AxiomC{ $x_{0} \underset{\sigma{(\beta^{n})}\sigma{(\alpha^{m})}}{=}x_{0}$}
\AxiomC{$x_{0} \underset{\beta^{n}\alpha^{m}}{=} x_{0}$}
\BinaryInfC{ $x_{0} \underset{\tau \left(\sigma{(\beta^{n})}\sigma{(\alpha^{m}),\beta^{n}\alpha^{m}}\right)}{=}x_{0}$}
\end{prooftree}

But, 

\begin{eqnarray*}
\tau(\sigma{(\beta^{n})}\sigma{(\alpha^{m})},\beta^{n}\alpha^{m}) &=& (\beta^{n}\alpha^{m}) \circ (\sigma{(\beta^{n})}\sigma{(\alpha^{m})})\\
 &=& \beta^{n}\alpha^{m}\sigma{(\beta^{n})}\sigma{(\alpha^{m})}\\
 &\underset{co}{=}& \beta^{n}\sigma{(\beta^{n})}\alpha^{m}\sigma{(\alpha^{m})}\\
 &\underset{tr}{=}&\rho_{\beta}\rho_{\alpha}\underset{trr}{=}\rho_{x_{0}}.
\end{eqnarray*}


\item[($\epsilon$): Identity]$$$$ 

\begin{prooftree}
\AxiomC{ $x_{0} \underset{\beta^{n}\alpha^{m}}{=}x_{0}$}
\AxiomC{$x_{0} \underset{\rho_{x_{0}}}{=} x_{0}$}
\BinaryInfC{ $x_{0} \underset{\tau \left(\beta^{n}\alpha^{m},\rho_{x_{0}}\right)}{=}x_{0}$}
\end{prooftree}

But, 

\begin{eqnarray*}
\tau(\beta^{n}\alpha^{m},\rho_{x_{0}}) &=& (\rho_{x_{0}}) \circ (\beta^{n}\alpha^{m})  \\
 &=& \rho_{x_{0}}\beta^{n}\alpha^{m}  \\
 &\underset{tlr}{=}&  \beta^{n}\alpha^{m}
\end{eqnarray*}
 
 and so \[
 \tau(\beta^{n}\alpha^{m},\rho_{x_{0}}) \underset{trr}{=} \beta^{n}\alpha^{m}.
\]

On the other hand, we have: $$$$

\begin{prooftree}
\AxiomC{ $x_{0} \underset{\rho_{x_{0}}}{=}x_{0}$}
\AxiomC{$x_{0} \underset{\beta^{n}\alpha^{m}}{=} x_{0}$}
\BinaryInfC{ $x_{0} \underset{\tau \left(\rho_{x_{0},\beta^{n}\alpha^{m}}\right)}{=}x_{0}$}.
\end{prooftree}

\bigskip
But, 

\begin{eqnarray*}
\tau \left(\rho_{x_{0}},\beta^{n}\alpha^{m}\right) &=&  (\beta^{n}\alpha^{m}) \circ (\rho_{x_{0}})  \\
 &=& \beta^{n}\alpha^{m} \rho_{x_{0}}  \\
 &\underset{trr}{=}&  \beta^{n}\alpha^{m}
\end{eqnarray*}

and so \[
 \tau(\rho_{x_{0}},\beta^{n}\alpha^{m}) \underset{trr}{=} \beta^{n}\alpha^{m}.
\]

\item[( $\circ$ ): Associativity]$$$$

\begin{prooftree}
\AxiomC{$x_{0} \underset{\beta^{n}\alpha^{m}}{=}x_{0}$}
\AxiomC{$x_{0} \underset{\beta^{i}\alpha^{j}}{=} x_{0}$}
\BinaryInfC{$ x_{0} \underset{\tau \left(\beta^{n}\alpha^{m},\beta^{i}\alpha^{j}\right)}{=}x_{0}$}
\AxiomC{$x_{0} \underset{\beta^{r}\alpha^{s}}{=}x_{0}$}
\BinaryInfC{$x_{0} \underset{\tau \left(\tau \left(\beta^{n}\alpha^{m},\beta^{i}\alpha^{j}\right),\beta^{r}\alpha^{s} \right)}{=}x_{0}$}.
\end{prooftree}

But, 

\begin{eqnarray*}
\tau \left(\tau \left(\beta^{n}\alpha^{m},\beta^{i}\alpha^{j}\right),\beta^{r}\alpha^{s} \right)  
 &=& (\beta^{r}\alpha^{s}) \circ \tau (\beta^{n}\alpha^{m},\beta^{i}\alpha^{j}) \\
 &=& (\beta^{r}\alpha^{s}) \circ (\beta^{i}\alpha^{j}\circ \beta^{n}\alpha^{m})\\
 &=& (\beta^{r}\alpha^{s}) \circ (\beta^{i}\alpha^{j}\beta^{n}\alpha^{m})\\
 &=& \beta^{r}\alpha^{s} \beta^{i}\alpha^{j}\beta^{n}\alpha^{m}.\\
\end{eqnarray*}
\bigskip

On the other hand, we have:

\begin{prooftree}
\AxiomC{$x_{0} \underset{\beta^{n}\alpha^{m}}{=}x_{0}$}
\AxiomC{$x_{0} \underset{\beta^{i}\alpha^{j}}{=} x_{0}$}
\AxiomC{$x_{0} \underset{\beta^{r}\alpha^{s}}{=} x_{0}$}
\BinaryInfC{$ x_{0} \underset{\tau \left(\beta^{i}\alpha^{j},\beta^{r}\alpha^{s}\right)}{=}x_{0}$}
\BinaryInfC{$x_{0} \underset{\tau (\beta^{n}\alpha^{m}, \tau \left(\beta^{i}\alpha^{j},\beta^{r}\alpha^{s})\right)}{=}x_{0}$}.
\end{prooftree}

But, 

\begin{eqnarray*}
\tau \left(\beta^{n}\alpha^{m} ,\tau \left(\beta^{i}\alpha^{j},\beta^{r}\alpha^{s}\right)\right)  
 &=&  \tau (\beta^{i}\alpha^{j},\beta^{r}\alpha^{s})\circ(\beta^{n}\alpha^{m}) \\
 &=& (\beta^{r}\alpha^{s} \circ \beta^{i}\alpha^{j})\circ (\beta^{n}\alpha^{m}) \\
 &=& (\beta^{r}\alpha^{s}\beta^{i}\alpha^{j})\circ (\beta^{n}\alpha^{m}) \\
 &=& \beta^{r}\alpha^{s} \beta^{i}\alpha^{j}\beta^{n}\alpha^{m}.\\
\end{eqnarray*}

Therefore, it follows that $ \left( \Pi_{1}(\mathbb{T}^{2},x_{0}),\circ \right)$ is a group.

\end{proof}

Now, using an analogous approach as in the previous section we will prove using computational paths and Van Kampen Theorem the following  theorem:

\begin{theorem}
$\pi_{1}(\mathbb{T}^{2},x_{0})$ is a  free group generated by loops  $\alpha$ and $\beta$ such that $\beta\alpha\beta^{-1}\alpha^{-1}=\rho_{x_0}$, that is, $$\pi_{1}(\mathbb{T}^{2},x_{0})\simeq \big<\alpha,\beta | \beta\alpha\beta^{-1}\alpha^{-1}\big>.$$
\end{theorem}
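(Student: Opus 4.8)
The plan is to imitate, step by step, the argument already carried out for the Klein bottle, changing only the word on the boundary of the square. Write $\mathbb{T}^2 = V \cup W$, where $V$ is the torus minus one interior point, $W$ is an open disk covering that puncture, and $V \cap W$ is a punctured open disk; all three are path-connected, so the hypotheses of the Seifert--Van Kampen Theorem are met. Pick a base point $x_1 \in V \cap W$ together with a computational path $x_0 \underset{\mu}{=} x_1$ inside $V$. Exactly as before, every $\textit{loop}_{x_1}$ in $W$ deforms continuously to $\rho_{x_1}$, so $\pi_1(W,x_1) = \rho_{x_1}$; and $V \cap W$ admits the deformation retract $H(x,t) = (1-t)x + t\,r(x)$ onto a circle $L = \langle \xi_{x_1}\rangle$, so by the deformation-retract theorem and the fact that $\pi_1(L,x_1) \simeq \pi_1(S^1,x_1)$ we get $\pi_1(V \cap W, x_1) = \langle \xi_{x_1}\rangle$.

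Next I would compute $\pi_1(V,x_0)$. Using the same projection/collapse map and the homotopy $H(x,t) = (1-t)x + t\,r(x)$, the space $V$ deformation retracts onto the wedge $E$ of the two loops $\alpha$ and $\beta$ based at $x_0$; hence, by the deformation-retract theorem, $\pi_1(V,x_0) = \langle \alpha_{x_0}, \beta_{x_0}\rangle$, the \emph{free} group on the two generators (note the relation $co$ does \emph{not} hold in $V$, since $V$ omits the $2$-cell). Then, by Theorem 3, the change-of-base-point map $\kappa_\mu : \pi_1(V,x_1) \to \pi_1(V,x_0)$ given by $\kappa_\mu([\alpha_{x_1}]) = \tau(\tau(\mu,\alpha_{x_1}),\sigma(\mu))$ is an isomorphism, and likewise $\kappa_\mu : \pi_1(\mathbb{T}^2,x_1) \to \pi_1(\mathbb{T}^2,x_0)$; this lets me move the whole Seifert--Van Kampen diagram to the base point $x_0$.

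The decisive point is to identify the image of the generator $\xi_{x_1}$ of $\pi_1(V \cap W,x_1)$ under $i^V_*$ transported by $\kappa_\mu$: because $\xi_{x_1}$ is (up to homotopy) the boundary of the square representing $\mathbb{T}^2$, read off as $\beta \ast \alpha \ast \beta^{-1} \ast \alpha^{-1}$, the transported loop is $\kappa_\mu(i^V_*([\xi_{x_1}])) = \beta\alpha\beta^{-1}\alpha^{-1} \in \pi_1(V,x_0)$; this is where the torus differs from the Klein bottle, whose boundary word was $\beta\alpha\beta\alpha^{-1}$. The Seifert--Van Kampen Theorem then gives $\pi_1(\mathbb{T}^2,x_0) \simeq \big(\pi_1(V,x_0) \ast \pi_1(W,x_0)\big)\big/N = \langle \alpha,\beta\rangle / N$, where $N$ is the normal closure of the image of $\pi_1(V \cap W,x_1)$, i.e.\ of $\beta\alpha\beta^{-1}\alpha^{-1}$; equivalently $i^K_* : \pi_1(V,x_0) \to \pi_1(\mathbb{T}^2,x_0)$ is surjective with kernel that normal closure. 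Hence $\pi_1(\mathbb{T}^2,x_0) \simeq \langle \alpha,\beta \mid \beta\alpha\beta^{-1}\alpha^{-1}\rangle$. I expect the main obstacle to be the last identification: verifying rigorously, at the level of computational paths and the $LND_{EQ}$-$TRS$ rewrites, that the inclusion-then-$\kappa_\mu$ image of $\xi_{x_1}$ is \emph{precisely} the commutator $\beta\alpha\beta^{-1}\alpha^{-1}$ (and not merely conjugate to it up to some boundary bookkeeping); the earlier Proposition showing $\beta\ast\alpha\ast\beta^{-1}\ast\alpha^{-1} \underset{rw}{=} \rho_{x_0}$ in $\mathbb{T}^2$ via the $co$ path is exactly the consistency check that this element must die once $W$ is glued back in, which is reassuring but is the geometric content that needs care.
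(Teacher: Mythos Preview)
Your proposal is correct and follows essentially the same route as the paper: the identical Seifert--Van Kampen decomposition $\mathbb{T}^2=V\cup W$, the same computations $\pi_1(W,x_1)=\rho_{x_1}$, $\pi_1(V\cap W,x_1)=\langle\xi_{x_1}\rangle$, $\pi_1(V,x_0)=\langle\alpha,\beta\rangle$ via deformation retracts, the same change-of-basepoint isomorphism $\kappa_\mu$, and the same identification of the transported generator with the boundary word $\beta\alpha\beta^{-1}\alpha^{-1}$ (the only change from the Klein-bottle argument). Your closing remark that the delicate step is the rigorous identification of $\kappa_\mu(i^V_*([\xi_{x_1}]))$ with the commutator is apt: the paper handles this point exactly as you anticipate, namely by the geometric observation that every loop at $x_1$ in $V$ continuously deforms to $\beta\alpha\beta^{-1}\alpha^{-1}$, rather than by an explicit $LND_{EQ}$-$TRS$ derivation.
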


For the proof of this theorem we will need the Van Kampen Theorem, so put $\mathbb{T}^2=V\cup W$, where $V$ and $W$ satisfy the hypotheses of the theorem. Let $i^{V}: V\cap W \longrightarrow V$ and $i^{W}: V\cap W \longrightarrow W$ be the inclusion maps. So, give $x_1 \in V\cap W$, the homomorphism induced by the inclusion maps are:

$$i_{*}^{V}: \pi_{1}(V \cap W,x_1)\longrightarrow \pi_{1}(V,x_1)$$ and

$$i_{*}^{W}: \pi_{1}(V \cap W,x_1)\longrightarrow \pi_{1}(W,x_1).$$

Let's consider the subsets $V$, $W$ and $V\cap W$ as follows:

\begin{figure}[H]
    \centering
    \subfloat[$V\subset \mathbb{T}^2$]{{\includegraphics[width=3.6cm]{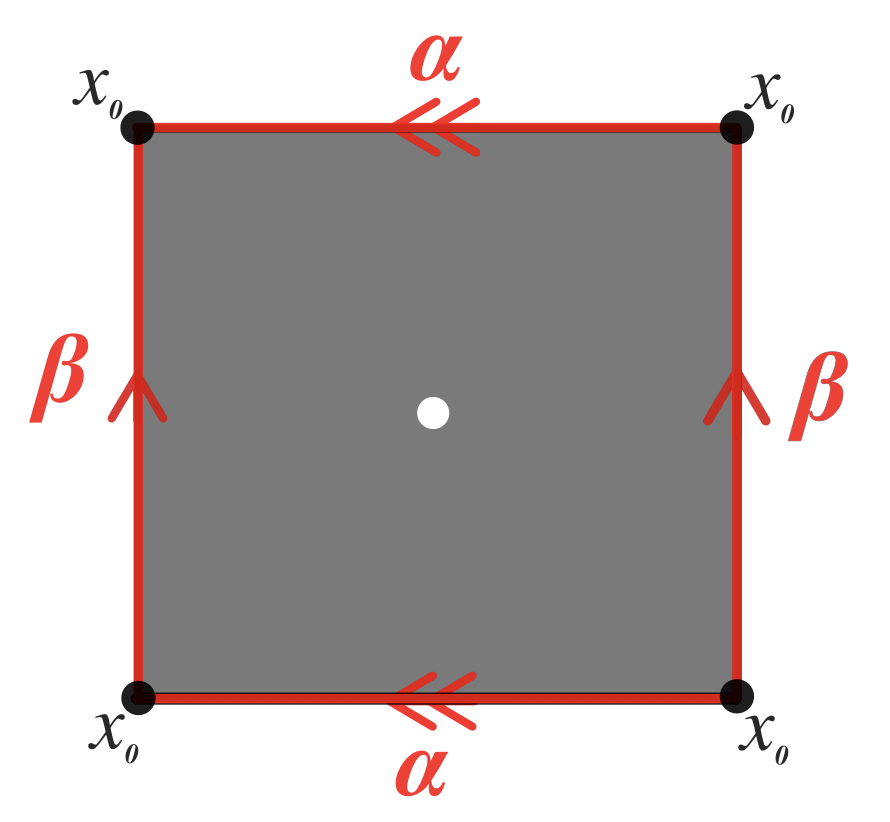} }}%
    \qquad
    \subfloat[$W\subset \mathbb{T}^2$]{{\includegraphics[width=2.8cm]{Klein4.png} }}%
    \qquad
    \subfloat[$V\cap W\subset \mathbb{T}^2$]{{\includegraphics[width=2.8cm]{Klein5.png} }}%
    \caption{The subsets are represented in darker color.}%
    \label{Rotulo21}
\end{figure}

From previous results we have:

\begin{itemize}
    \item [(i)] $\pi_{1}(W,x_1)=\rho_{x_1}$, where $x_1\in W$.
    \item [(ii)]  $\pi_{1}(V\cap W,x_1)= \big<\xi_{x_1}\big>$, where $\xi_{x_1}$ is a loop in $x_1$ that contains the hole in its interior.
    
    \item[(iii)]  $\pi_{1}(V,x_0)= \big<\alpha_{x_0},\beta_{x_0}\big>$, that is, is the free group generated $[\alpha_{x_0}]$ and $[\beta_{x_0}]$. Here we also use the arguments of the point identification $x_0$ and the projection application.
\end{itemize}

\begin{itemize}
    \item [(iv)] Notice, again, that we calculate the $\pi_{1}(V,x_0)$ with respect to the point $x_0$, but to obtain the induced homomorphism $i_{*}^{V}:\pi_{1}(V\cap W,x_1) \longrightarrow \pi_{1}(V,x_1)$ we must obtain $ \pi_{1}(V,x_1)$. By  theorem 1, since $V$ is a space connected by paths and $x_{0}\underset{\mu}{=} x_{1}$, we can claim that the map  $$\kappa_{\mu}: \pi_1(V,x_1) \longrightarrow \pi_1(V,x_0)$$ given by: $$\kappa_{\mu}([\alpha_{x_1)}])=[\mu]*[\alpha_{x_1}]*[\sigma(\mu])= \tau \big(\tau \big(\mu,\alpha_{x_1}\big),\sigma(\mu) \big),$$

is the desired isomorphism and therefore $\pi_1(V,x_1) \simeq \pi_1(V,x_0)$. In this case, we conclude that: $\kappa_{\mu}([\alpha_{x_1)}])=\beta\alpha\beta\alpha^{-1}$, where $\beta\alpha\beta\alpha^{-1}\in \pi_1(V,x_0)$.
\end{itemize}

Geometrically, we have:

 \begin{figure}[H]
\centering
\includegraphics[width=0.5\columnwidth]{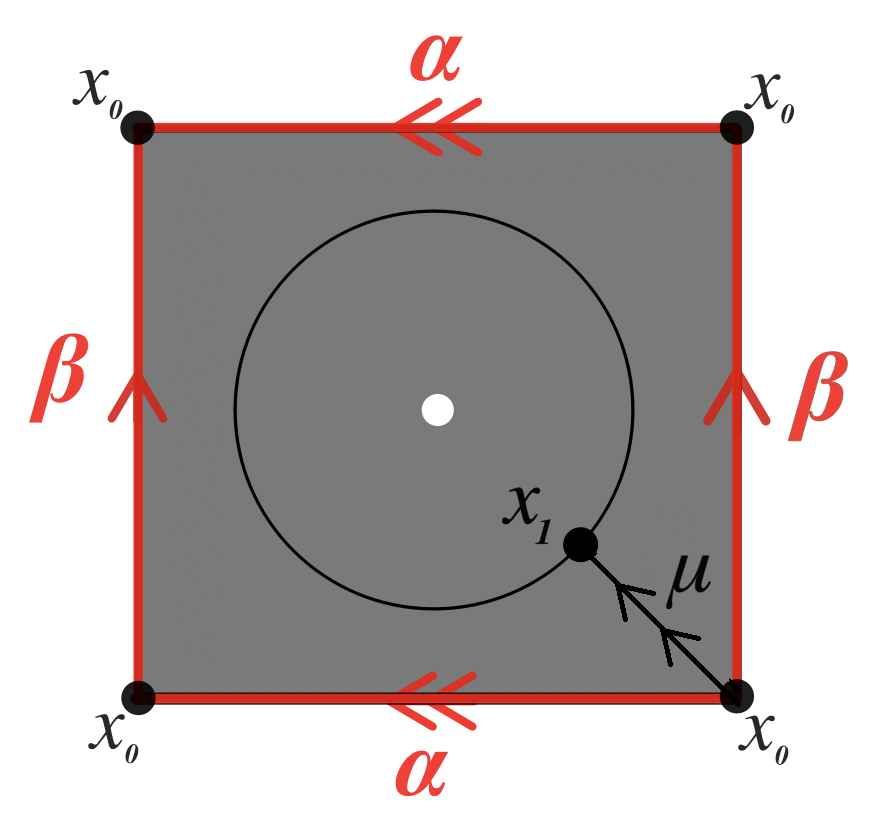}
\caption{Every loop in $x_1$  continuously deforms in $ \beta\alpha\beta^{-1}\alpha^{-1}$.} 
\label{Rotulo22}
\end{figure}
Therefore  $$\pi_1(V,x_0)= \big<\alpha,\beta \mid \beta\alpha\beta^{-1}\alpha^{-1}\big>.$$

Since $V \subset \mathbb{T}^2$ so $i^{\mathbb{T}^2}:V \longrightarrow \mathbb{T}^2$ is the inclusion map, we have that the map $i^{\mathbb{T}^2}_{*}:\pi_{1}(V,x_{1}) \longrightarrow \pi_1(\mathbb{T}^2, x_1)$ is the homomorphism induced by inclusion map and its kernel is given by normal subgroup of $\pi_1(V,x_1)$ which is generated by image of $\pi_1(V\cap W,x_1)$. In addition, as $\pi_{1}(V,x_{1})=\rho_{x_1}$ the Seirfet-Van Kampen Theorem ensures that  $\pi_{1}(\mathbb{T}^2,x_{1})$ depends on $\pi_1(V\cap W,x_1)$,  $\pi_1(V,x_1)$ and the morphims between them. This gives us:

$$
\begin{tikzcd}
\pi_1(V\cap W,x_1) \arrow[r, "i_{*}^{V}" blue] 
    & \pi_1(V,x_1) \arrow[r, "i_{*}^{\mathbb{T}^2}" blue]
    & \pi_1(\mathbb{T}^2,x_1).\\
\end{tikzcd}\\\\
$$

On the other hand, since $V$ and $\mathbb{T}^2$ are connected by paths, it follows from theorem 3 that:

 $$\kappa_{\mu}: \pi_1(V,x_1) \longrightarrow \pi_1(V,x_0)$$
 
 and
 
  $$\kappa_{\mu}:\pi_1(\mathbb{T}^2,x_1) \longrightarrow \pi_1(\mathbb{T}^2,x_0)$$

are isomorphisms. This way we get the following diagram:

$$
\begin{tikzcd}
\pi_1(V\cap W,x_1) \arrow[r, "i_{*}^{V}" blue] 
    & \pi_1(V,x_1) \arrow[d,  "\kappa_{\mu}" red] \arrow[r, "i_{*}^{\mathbb{T}^2}" blue]
    & \pi_1(\mathbb{T}^2,x_1) \arrow[d, "\kappa_{\mu}" red]\\
    & \pi_1(V,x_0) \arrow[r,  "i_{*}^{\mathbb{T}^2}" blue]
    &\pi_1(\mathbb{T}^2,x_0).
\end{tikzcd}\\\\
$$

Since $i^{\mathbb{T}^2}_{*}:\pi_{1}(V,x_{0}) \longrightarrow \pi_1(\mathbb{T}^2, x_0)$ is a homomorphism induced by inclusion map $i^{\mathbb{T}^2}:V \longrightarrow \mathbb{T}^2$ and by diagram commutativity we have that $i^{\mathbb{T}^2}_{*}:\pi_{1}(V,x_{0}) \longrightarrow \pi_1(\mathbb{T}^2, x_0)$ is surjective with kernel is the normal subgroup generated by the image of $i^{V}_{*}*\kappa_{\mu}=\tau(i^{\mathbb{T}^2}_{*},\kappa_{\mu})$.

Therefore, we can conclude that 

$$\pi_{1}(\mathbb{T}^2, x_0) \simeq \big<\alpha,\beta \mid \beta\alpha\beta^{-1}\alpha^{-1}\big>.$$


  \subsection{Fundamental Group of Two-holed Torus - $\pi_1(\mathbb{M}_{2},x_0)$}

 \begin{definition}
 The connected sum of two $n$-dimensional connected surfaces $M_1$ and $M_2$ is the surface $M$, defined up to homeomorphism, obtained by removing an open set homeomorphic to the disk to each of the surfaces and by identifying (``glueing") the boundaries. So we can denote the connected sum by: $$M=M_1\# M_2.$$
   \end{definition}

  We want compute the fundamental group of the connected sum of two torus, the surface resulting from the connected sum ("glueing") of two torus is called two holed torus, denoted by $\mathbb{M}_2=\mathbb{T}_1^2\#\mathbb{T}_2^2$ 
  and can be represented geometrically by:

 \begin{figure}[H]
\centering
\includegraphics[width=0.9\columnwidth]{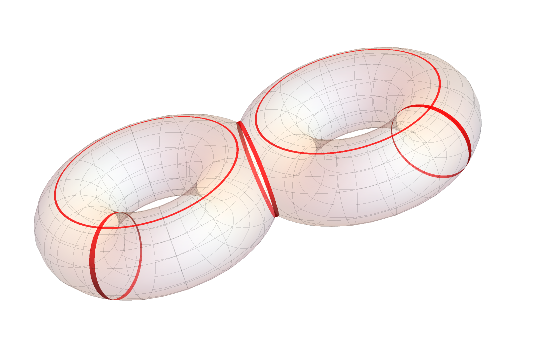}
\caption{$\mathbb{M}_2=\mathbb{T}_1^2\#\mathbb{T}_2^2$ Two holed torus.} 
\label{Rotulo23}
\end{figure}
\begin{proof}

The results obtained in the last subsections mean that our objective can be obtained rather straightforwardly, as we will see next. Consider $\mathbb{M}^2=M_1 \cup M_2$ and see that $M_0=M_1\cap M_2$ is homotopy equivalent to circle $S^1$ and therefore  $\pi_1(\mathbb{M}_0)\simeq \pi_1(S^1)\simeq \pi_1(\mathbb{Z})$.

\begin{figure}[H]
    \centering
    \subfloat{{\includegraphics[width=5.5cm]{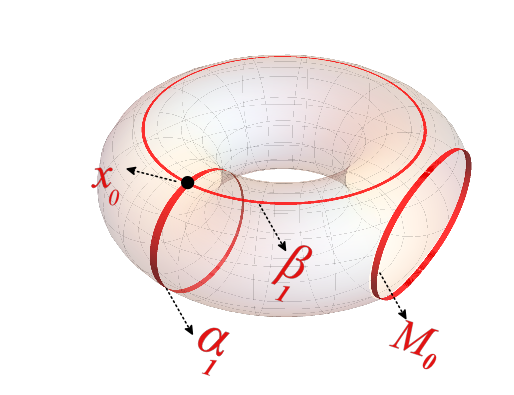} }}%
    \qquad
    \subfloat{{\includegraphics[width=5.5cm]{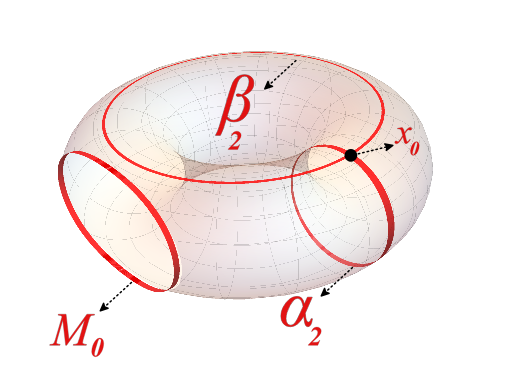} }}%
    \qquad
    \caption{$M_1$ (left) and $M_2$ (right) .} 
    \label{Rotulo25}
\end{figure}

Slicing $M_1$  and $M_2$ , we can represent them as rectangles, whose sides are the loops $\alpha_1$, $\beta_1$ for $M_1$ and $\alpha_2$, $\beta_2$ for $M_2$ as shown in \textbf{figure 14}.

\begin{figure}[H]
    \centering
    \subfloat{{\includegraphics[width=4.0cm]{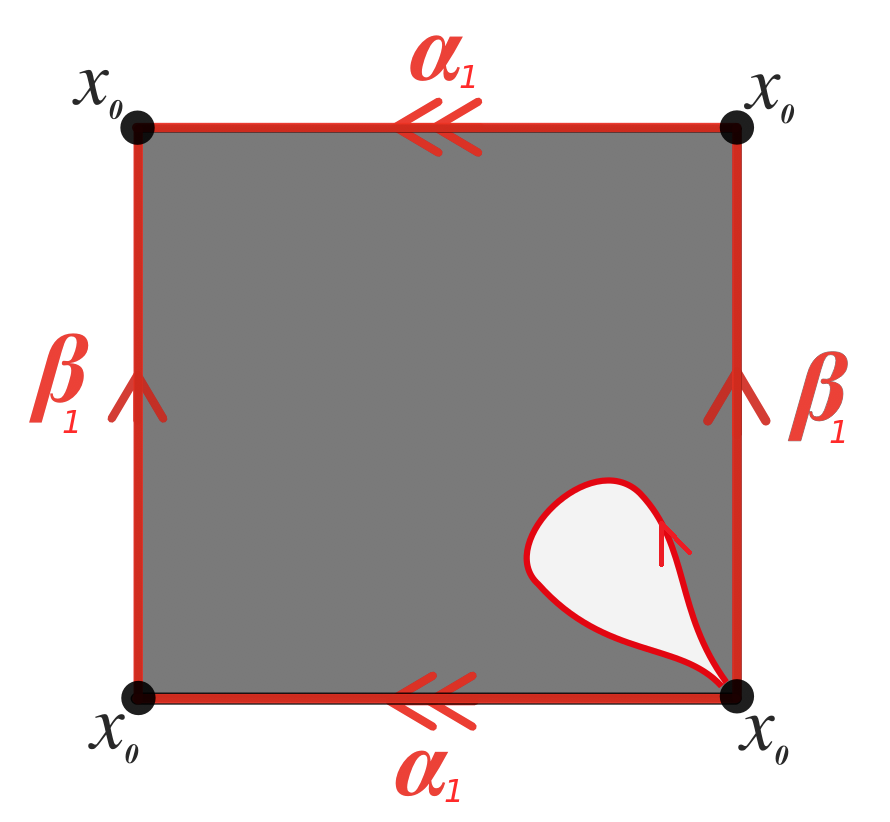} }}%
    \qquad
    \subfloat{{\includegraphics[width=4.0cm]{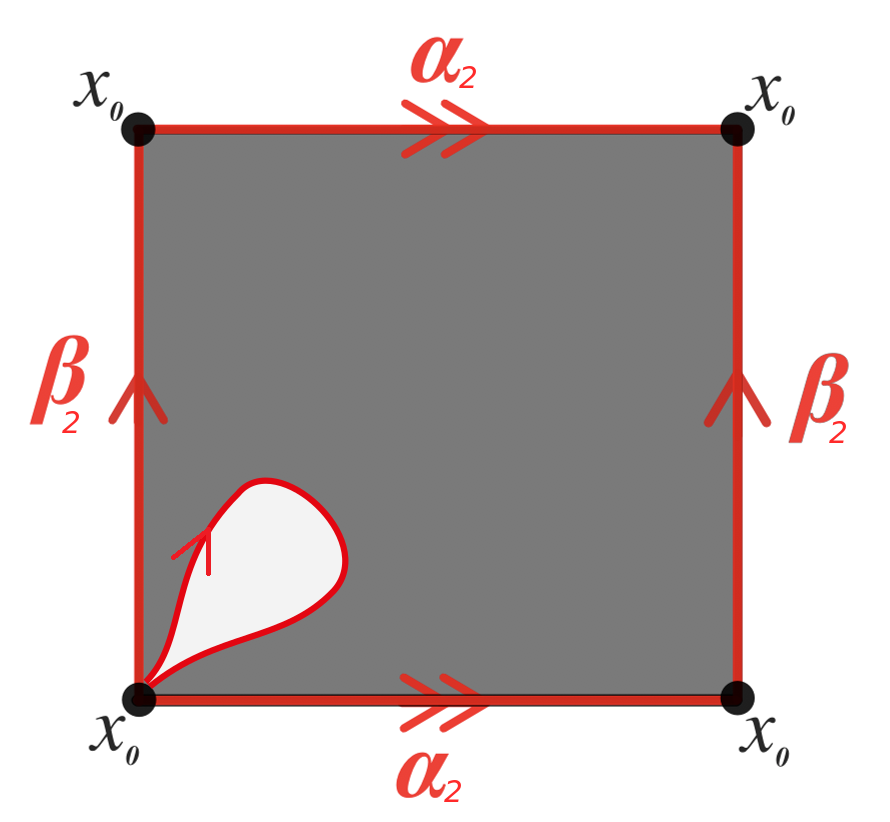} }}%
    \qquad
    \caption{$M_1$ sliced and $M_2$  sliced.} 
    \label{Rotulo25}
\end{figure}

 But $M_i$ is homotopy equivalent to $\mathbb{T}_{i}^{2}$ minus one point $x_{1}$ for $i=\{1,2\}$. So we can calculate $\pi_1\big( \mathbb{M}_i \big)$ by means of the $\pi_1\big(\mathbb{T}_i^2\smallsetminus\{x_1\}\big)$ because
  $$\pi_1(\mathbb{M}_i)\simeq \pi_1(\mathbb{T}_i^2\smallsetminus\{x_1\}).$$

Therefore, we can calculate the fundamental groups of:

\begin{figure}[H]
    \centering
    \subfloat{{\includegraphics[width=4.0cm]{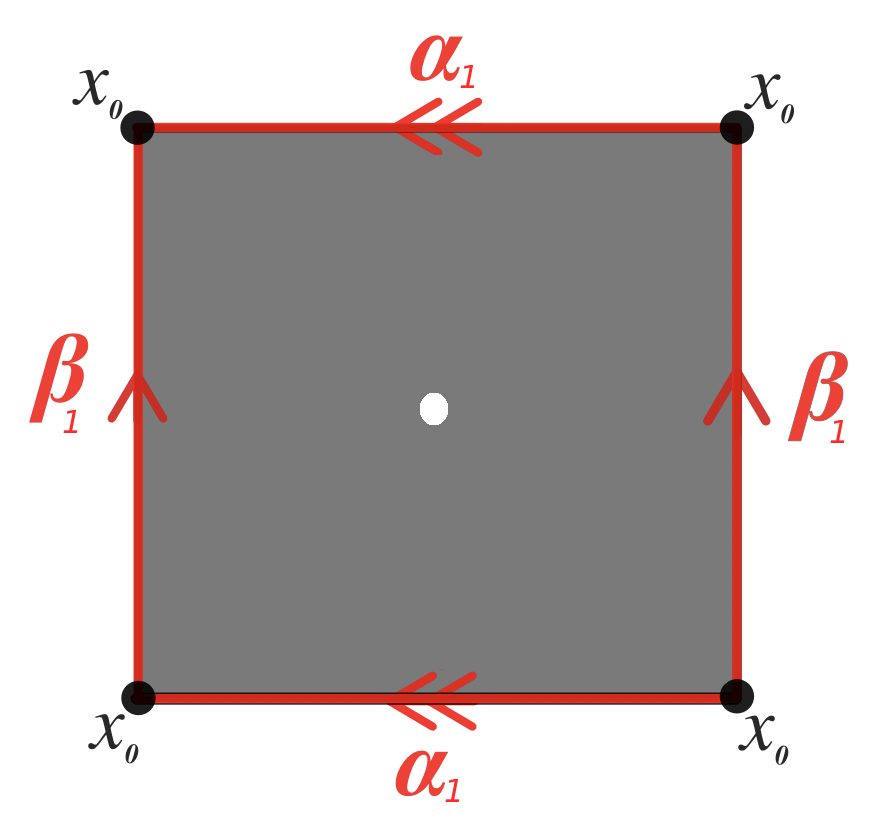} }}%
    \qquad
    \subfloat{{\includegraphics[width=4.0cm]{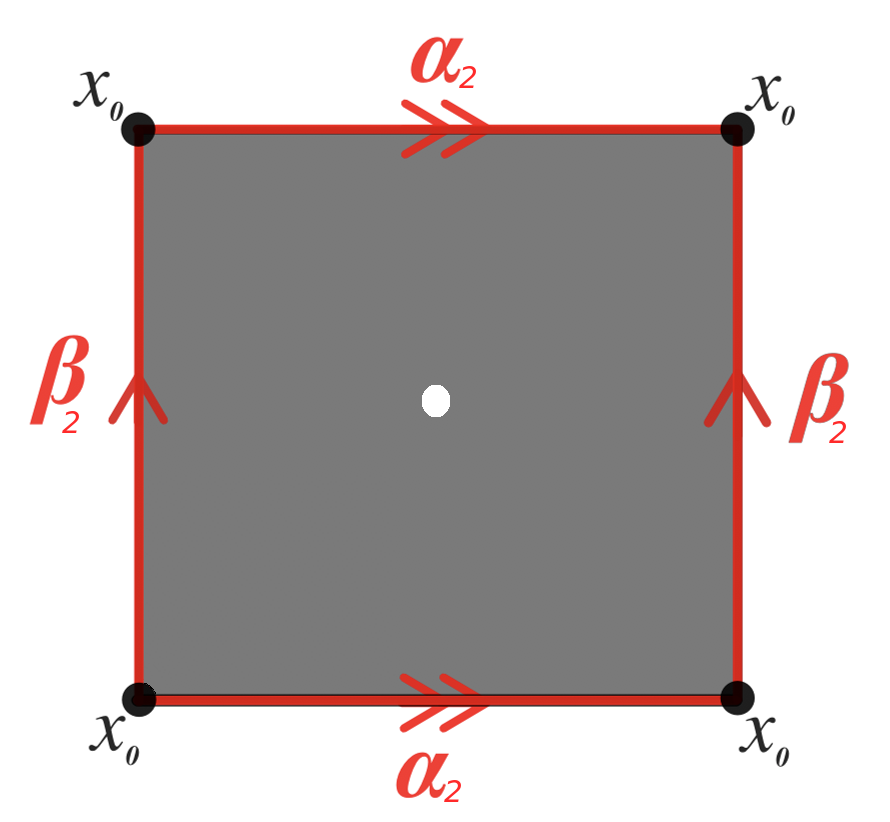} }}%
    \qquad
    \caption{$\mathbb{T}_1^2\smallsetminus\{x_1\}$ (left) and $\mathbb{T}_2^2\smallsetminus\{x_1\}$ (right) sliced.} 
    \label{Rotulo24}
\end{figure}

Again, from previous results we have:

\begin{itemize}
\item [(i)] $\pi_{1}(M_0,x_0)\simeq \mathbb{Z}.$
\item [(ii)]  $\pi_{1}(M_1,x_0)\simeq \big<\alpha_1,\beta_1 \mid [\alpha_1,\beta_1]\big>$, where  $[\alpha_1,\beta_1]=\beta_1\alpha_1\beta_1^{-1}\alpha_1^{-1}=\rho_{x_0}$.
\item[(iii)]  $\pi_{1}(M_2,x_0)\simeq \big<\alpha_2,\beta_2 \mid [\alpha_2,\beta_2]\big>$, where  $[\alpha_2,\beta_2]=\beta_2\alpha_2\beta_2^{-1}\alpha_2^{-1}=\rho_{x_0}$.
\end{itemize}

By Van Kampen theorem, we can conclude:

  $$\pi_{1}(\mathbb{M}_2,x_0)=\pi_{1}(\mathbb{T}^2 \#\mathbb{T}^2,x_0)\simeq \pi_{1}(M_1,x_0)*_{\pi_{1}(M_1 \cap M_2,x_0)} \pi_{1}(M_1,x_0)$$

that is,
$$\pi_{1}(\mathbb{M}_2,x_0) \simeq \big<\alpha_1,\beta_1, \alpha_2,\beta_2, \mid [\alpha_1,\beta_1][\alpha_2,\beta_2]=\rho_{x_0}\big>,$$
where $[\alpha_i,\beta_i]=\beta_i\alpha_i\beta_i^{-1}\alpha_i^{-1}=\rho_{x_0}$ for $i=1,2$.
\end{proof}

This last result allows us to calculate the fundamental group of the $n$-holed torus. If $\mathbb{M}_n=\mathbb{T}^2\#\mathbb{T}^2\#\mathbb{T}^2\#\ldots \#\mathbb{T}^2\#\mathbb{T}^2$, that is, $\mathbb{M}_n$ is the connected sum of $n$ torus so:

$$\pi_{1}(\mathbb{M}_n,x_0)\simeq \big<\alpha_1,\beta_1, \alpha_2,\beta_2,\ldots,\alpha_n,\beta_n \mid [\alpha_,\beta_1][\alpha_2,\beta_2]\ldots[\alpha_n,\beta_n]=\rho_{x_0}\big>,$$

where $[\alpha_i,\beta_i]=\beta_i\alpha_i\beta_i^{-1}\alpha_i^{-1}=\rho_{x_0}$ for $i=1,2, \ldots n$.


	\section{Conclusion}

Our purpose is to further explore the study of results that can be obtained with  a labelled deduction system based on the concept of computational paths (sequence of rewrites). Moreover, we realise that the theory can be used with a computational approach to algebraic topology. In the following, to verify whether the theory of computational paths proves other results (Theorems, propositions) and whether the theories fit to the other definitions is a proposal for a sequel to this work.

\newpage
	\appendix
	
	\section{Subterm Substitution}
	
	In Equational Logic, the sub-term substitution is given by the following inference rule \cite{Ruy2}:
	\begin{center}
		\begin{bprooftree}
			\AxiomC{$s = t$ }
			\UnaryInfC{$s\theta = t\theta$}
		\end{bprooftree}
	\end{center}
	
	One problem is that such rule does not respect the sub-formula property. To deal with that, \cite{chenadec} proposes two inference rules:
	
	\begin{center}
		\begin{bprooftree}
			\AxiomC{$M = N$}
			\AxiomC{$C[N] = O$}
			\RightLabel{$IL$ \quad}
			\BinaryInfC{$C[M] = O$}
		\end{bprooftree}
		\begin{bprooftree}
			\AxiomC{$M = C[N]$}
			\AxiomC{$N = O$}
			\RightLabel{$IR$ \quad}
			\BinaryInfC{$M = C[O]$}
		\end{bprooftree}
	\end{center}
	
	\noindent where M, N and O are terms.
	
	As proposed in \cite{Ruy1}, we can define similar rules using computational paths, as follows:
	
	\begin{center}
		\begin{bprooftree}
			\AxiomC{$x =_r {\cal C}[y]: A$}
			\AxiomC{$y =_s u : A'$}
			\BinaryInfC{$x =_{{\tt sub}_{\tt L}(r,s)} {\cal C}[u]: A$}
		\end{bprooftree}
		\begin{bprooftree}
			\AxiomC{$x =_r w : A'$}
			\AxiomC{${\cal C}[w]=_s u : A$}
			\BinaryInfC{${\cal C}[x]=_{{\tt sub}_{\tt R}(r,s)} u : A$}
		\end{bprooftree}
	\end{center}
	
	\noindent where $C$ is the context in which the sub-term detached by '[ ]' appears and $A'$ could be a sub-domain of $A$, equal to $A$ or disjoint to $A$.
	
	In the rule above, ${\cal C}[u]$ should be understood as the result of replacing every occurrence of $y$ by $u$ in $C$.
	
	\newpage
	
	\section{List of Rewrite Rules}
	
	We present all rewrite rules of $LND_{EQ}$-$TRS$. They are as follows (All have been taken from \cite{Ruy1}):
	
	\begin{itemize}
	 \item[1.] $\sigma(\rho) \triangleright_{sr} \rho$ 
	 \item[2.] $\sigma(\sigma(r)) \triangleright_{ss} r$\
	 \item[3.] $\tau({\cal C}[r] , {\cal C}[\sigma(r)]) \triangleright_{tr}  {\cal C }[\rho]$
	\item [4.] $\tau({\cal C}[\sigma(r)], {\cal C}[r]) \triangleright_{tsr} {\cal C}[\rho]$
	\item[5.] $\tau({\cal C}[r], {\cal C}[\rho]) \triangleright_{trr} {\cal C}[r]$
	\item [6.]$\tau({\cal C}[\rho], {\cal C}[r]) \triangleright_{tlr} {\cal C}[r]$ 
	\item[7.] ${\tt sub_L}({\cal C}[r], {\cal C}[\rho]) \triangleright_{slr} {\cal C}[r]$
	\item [8.] ${\tt sub_R}({\cal C}[\rho], {\cal C}[r]) \triangleright_{srr} {\cal C}[r]$ 
	\item[9.]  ${\tt sub_L} ({\tt sub_L} (s, {\cal C}[r]), {\cal C}[\sigma(r)]) \triangleright_{sls} s$
	\item[10.]  ${\tt sub_L} ( {\tt sub_L} (s , {\cal C}[\sigma(r)]) , {\cal C}[r]) \triangleright_{slss} s$
	\item[11.]  ${\tt sub_R} ({\cal C}[s], {\tt sub_R} ({\cal C}[\sigma(s)],r)) \triangleright_{srs} r$
	\item[12.]  ${\tt sub_R} ({\cal C}[\sigma(s)], {\tt sub_R} ({\cal C}[s] ,  r )) \triangleright_{srrr} r$
	\item[13.] $\mu_1 ( \xi_1 ( r))\triangleright_{mx2l1} r$
	\item[14.]  $\mu_1 ( \xi_\land ( r,s))\triangleright_{mx2l2} r$
	\item[15.] $\mu_2 ( \xi_\land ( r,s))\triangleright_{mx2r1} s$
	\item[16.] $\mu_2 ( \xi_2 ( s))\triangleright_{mx2r2} s$
	\item[17.] $\mu ( \xi_1 (r) , s , u) \triangleright_{mx3l} s$
	\item[18.] $\mu (\xi_2 (r) , s , u) \triangleright_{mx3r} u$
	\item[19.] $\nu (\xi (r)) \triangleright_{mxl} r$
	\item[20.] $\mu (\xi_2 (r) , s) \triangleright_{mxr} s$ 
	\item[21.] $\xi ( \mu_1 (r),\mu_2(r) ) \triangleright_{mx} r$ 
	\item[22.] $\mu ( t, \xi_1 (r), \xi_2 (s)) \triangleright_{mxx} t$  
	\item[23.] $\xi ( \nu (r) ) \triangleright_{xmr} r$  
	\item[24.] $\mu (s,\xi_2 (r)) \triangleright_{mx1r} s$
	\item[25.]  $\sigma(\tau(r,s)) \triangleright_{stss} \tau(\sigma(s),  \sigma(r))$ 
	\item[26.]  $\sigma({\tt sub_L}(r,s)) \triangleright_{ssbl} {\tt sub_R}(\sigma(s), \sigma(r))$ 
	\item[27.]  $\sigma ({\tt sub_R} (r,s)) \triangleright_{ssbr} {\tt sub_L} (\sigma(s),  \sigma (r))$ 
	\item[28.]  $\sigma(\xi (r)) \triangleright_{sx} \xi ( \sigma(r))$
	\item[29.]  $\sigma(\xi (s, r)) \triangleright_{sxss} \xi ( \sigma(s),  \sigma(r))$
	\item[30.]  $\sigma(\mu (r)) \triangleright_{sm} \mu ( \sigma(r))$
	\item[31.]  $\sigma(\mu (s, r)) \triangleright_{smss} \mu (\sigma(s),  \sigma(r))$ 
	\item[32.]  $\sigma(\mu (r,u,v)) \triangleright_{smsss} \mu ( \sigma(r),\sigma(u),\sigma(v))$
	\item[33.]  $\tau (r, {\tt sub_L} (\rho , s)) \triangleright_{tsbll} {\tt sub_L}  (r,s)$
	\item[34.]  $\tau (r, {\tt sub_R} (s, \rho)) \triangleright_{tsbrl}  {\tt sub_L} (r,s)$ 
	\item[35.]  $\tau({\tt sub_L}(r,s),t) \triangleright_{tsblr} \tau (r, {\tt sub_R} (s,t))$
	\item[36.]  $\tau ({\tt sub_R} (s,t),u) \triangleright_{tsbrr} {\tt sub_R} (s, \tau  (t,u))$ 
	\item[37.]  $\tau(\tau(t,r),s) \triangleright_{tt} \tau(t,\tau (r,s)) $
	\item[38.]  $\tau ({\cal C}[u], \tau ({\cal C}[\sigma(u)] , v)) \triangleright_{tts} v$
	\item[39.]  $\tau ({\cal C}[\sigma(u)] , \tau ({\cal C}[u] , v)) \triangleright_{tst} u$

	\end{itemize}
\end{document}